\let\iff=\Leftrightarrow        
\let\implies=\Rightarrow
\let\sqleq=\sqsubseteq
\let\pleq=\sqleq          
\let\ple=\sqle            
\let\pgt=\sqgt
\def\land{\mathrel{\wedge}}
\def\preleq{\mathrel{
  \raise 2pt\hbox{$\mathop\sqsubset\limits_{\hbox{$\sim$}}$}
}}
\let\union=\cup
\let\join=\sqcup
\let\bigjoin=\bigsqcup
\let\meet=\sqcap
\newcommand{\down}{\mathord{\downarrow}}
\newcommand\dec[1]{{{\Downarrow}#1}}
\newcommand{\map}{\mathrel{\hookrightarrow}}
\newcommand\auxfun[1]{\expandafter\newcommand\csname #1\endcsname{
 \mathop{\hbox{$\mathsf{#1}$}}\nolimits}}
 \newcommand\af[1]{\mathop{\hbox{$\mathsf{#1}$}}\nolimits}
\let\max=\relax
\newcommand{\pow}[1]{{\cal P}(#1)}
\newcommand{\mpow}[1]{{\cal M}(#1)}
\newcommand{\nat}{\mathds{N}}
\newcommand{\ids}{\mathds{I}}
\newcommand{\quotient}[2]{#1 / #2}
\newcommand\tup[1]{\langle #1 \rangle}
\newcommand\daf[1]{\af{#1}^\delta}
\newcommand\centerxy[1]{
  \begin{center}
    \begin{tabular}{c}
      \begin{xy}
        #1
      \end{xy}
    \end{tabular}
  \end{center}
}
\newcommand{\crdt}{\mathcal{L}}
\newcommand{\dm}{\m^\delta}
\newcommand\algsinglecol[4]{
  \begin{algorithm}[#1]
    \SetInd{.7em}{0em}
    \SetKwIF{If}{ElseIf}{Else}{if}{}{else if}{else}{end}
    \SetKwFor{For}{for}{}{end}

    \DontPrintSemicolon
    \SetKwBlock{inputs}{inputs:}{}
    \SetKwBlock{state}{state:}{}
    \SetKwBlock{periodically}{periodically}{}
    \SetKwBlock{on}{on}{}
    \SetKwBlock{fun}{fun}{}
    \SetKwBlock{blank}{}{}
    \SetKw{return}{return}
    #2
    \caption{#3\vspace{-1em}}
    \label{#4}
  \end{algorithm}
}
\newcommand{\seta}{\{a\}}
\newcommand{\setb}{\{b\}}
\newcommand{\setc}{\{c\}}
\newcommand{\setab}{\{a, b\}}
\newcommand{\setac}{\{a, c\}}
\newcommand{\setbc}{\{b, c\}}
\newcommand{\setabc}{\{a, b, c\}}
\newcommand\counta[1]{\{\A_{#1}\}}
\newcommand\countb[1]{\{\B_{#1}\}}
\newcommand\countab[2]{\{\A_{#1}, \B_{#2}\}}
\newcommand{\rmark}{\ding{51}}
\newcommand{\wmark}{\ding{55}}
\newcommand\qline[1]{\textbf{line} \ref{#1}}
\def\BaseColor{gray!15}
\def\OptColor{gray!45}
\newcommand\HL[1]{\leavevmode\rlap{\hbox to \hsize{\color{#1}\leaders\hrule height .7\baselineskip depth .95ex\hfill}}}
\newcommand\HLBase[1]{\colorbox{\BaseColor}{#1}}
\newcommand\HLOpt[1]{\colorbox{\OptColor}{#1}}
\theoremstyle{plain}
\newtheorem{definition}{Definition}
\newtheorem{example}{Example}
\newtheorem{proposition}{Proposition}
\begin{document}
\bstctlcite{IEEEexample:BSTcontrol}

\title{Efficient Synchronization of State-based CRDTs
}

\author{\IEEEauthorblockN{Vitor Enes}
\IEEEauthorblockA{\textit{HASLab / INESC TEC} and \\
\textit{Universidade do Minho} \\
Portugal}
\and
\IEEEauthorblockN{Paulo S\'{e}rgio Almeida}
\IEEEauthorblockA{\textit{HASLab / INESC TEC} and \\
\textit{Universidade do Minho} \\
Portugal}
\and
\IEEEauthorblockN{Carlos Baquero}
\IEEEauthorblockA{\textit{HASLab / INESC TEC} and \\
\textit{Universidade do Minho} \\
Portugal}
\and
\IEEEauthorblockN{Jo\~{a}o Leit\~{a}o}
\IEEEauthorblockA{\textit{NOVA LINCS, FCT} and \\
\textit{Universidade NOVA de Lisbon} \\
Portugal}
}

\maketitle

\begin{abstract}
  To ensure high availability in large scale distributed systems,
\emph{Conflict-free Replicated Data Types} (CRDTs)
relax consistency by allowing immediate query and update
operations at the local replica, with no need for remote synchronization.
State-based CRDTs synchronize replicas by periodically sending
their full state to other replicas, which can become
extremely costly as the CRDT state grows.
Delta-based CRDTs address this problem by producing small incremental states (deltas) to be used in synchronization
instead of the full state.
However, current synchronization algorithms for
delta-based CRDTs induce redundant wasteful delta propagation,
performing worse than expected, and surprisingly, no better
than state-based.
In this paper we:
1) identify two sources of inefficiency in current synchronization algorithms for delta-based CRDTs;
2) bring the concept of join decomposition to state-based CRDTs;
3) exploit join decompositions to obtain optimal deltas and
4) improve the efficiency of synchronization algorithms; and finally,
5) experimentally evaluate the improved algorithms.
 \end{abstract}

\begin{IEEEkeywords}
  CRDTs; Optimal Deltas; Join Decomposition;
\end{IEEEkeywords}

\section{Introduction}

Large-scale distributed systems often resort to replication techniques to
achieve fault-tolerance and load distribution. These systems have to make a
choice between availability and low latency or strong consistency
\cite{abadi-cap,brewer-cap,gilbert-cap,golab-pacelc}, many times opting for the
first \cite{consistency1,consistency2}.  A common approach is to allow
replicas of some data type to temporarily diverge, making sure these replicas
will eventually converge to the same state in a deterministic way.
\emph{Conflict-free Replicated Data Types} (CRDTs) \cite{crdts1,crdts2} can be
used to achieve this. They are key components in modern geo-replicated systems,
such as Riak~\cite{riak}, Redis~\cite{redis}, and Microsoft Azure Cosmos
DB~\cite{cosmosdb}.

CRDTs come mainly in two flavors: \emph{operation-based} and
\emph{state-based}.  In both, queries and updates can be executed immediately
at each replica, which ensures availability (as it never needs to coordinate
beforehand with remote replicas to execute operations).  In operation-based
CRDTs \cite{pure-op,crdts1}, operations are disseminated assuming a reliable
dissemination layer that ensures exactly-once causal delivery of operations.

State-based CRDTs need fewer guarantees from the communication channel:
messages can be dropped, duplicated, and reordered.  When an update operation
occurs, the local state is updated through a mutator, and from time to time
(since we can disseminate the state at a lower rate than the rate of the
updates) the full (local) state is propagated to other replicas.

Although state-based CRDTs can be disseminated over unreliable communication
channels, as the state grows, sending the full state
becomes unacceptably costly. Delta-based CRDTs \cite{deltas1,deltas2} address
this issue by defining delta-mutators that return a delta ($\delta$),
typically much smaller than the full state of the replica, to be merged with
the local state.  The same $\delta$ is also added to an outbound
$\delta$-buffer, to be periodically propagated to remote replicas.
Delta-based CRDTs have been adopted in industry as part of Akka Distributed
Data
framework~\cite{akka-data} and IPFS~\cite{ipfs, ipfs-deltas}.

However, and somewhat unexpectedly, we have observed (Figure \ref{fig:problem})
that current
delta-propagation algorithms can still disseminate much
redundant state between replicas,
performing worse than envisioned, and no better than the state-based approach.
This anomaly becomes noticeable when concurrent update
operations always occur between synchronization rounds,
and it is partially justified due to inefficient redundancy detection in
delta-propagation.

\begin{figure}[t]
  \begin{center}
    \begin{minipage}{0.33\textwidth}
      \includegraphics[width=\textwidth,keepaspectratio]{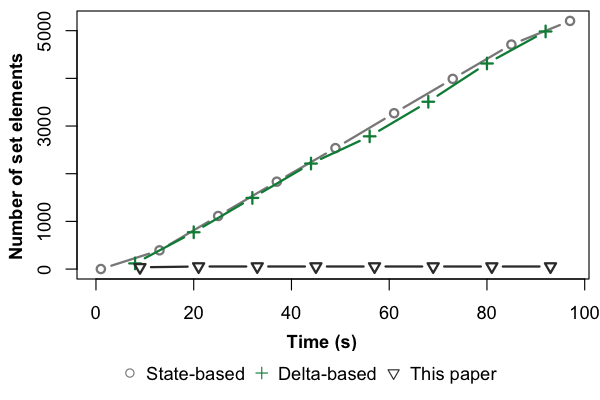}
    \end{minipage}
    \begin{minipage}{0.105\textwidth}
      \includegraphics[width=\textwidth,keepaspectratio]{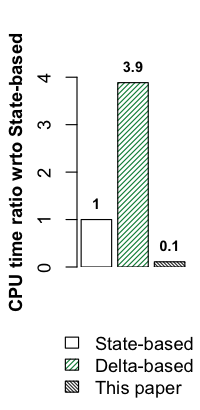}
    \end{minipage}
  \end{center}
  \caption{Experiment setup: 15 nodes in a partial mesh topology replicating
  an always-growing set. The left plot depicts the
  number of elements being sent throughout the experiment,
  while the right plot shows the CPU processing time ratio
  with respect to state-based. Not only does delta-based
  synchronization not improve state-based in terms of state transmission,
  it even incurs a substantial processing overhead.}
\label{fig:problem}
\end{figure}

In this paper we identify two sources of redundancy in current algorithms,
and introduce the concept of join decomposition of a state-based CRDT,
showing how it can be used to derive optimal deltas (``differences'') between states, as
well as optimal delta-mutators.
By exploiting these concepts, we also introduce an
improved synchronization algorithm, and experimentally
evaluate it, confirming that it outperforms current approaches
by reducing the amount of state transmission,
memory consumption, and processing time
required for delta-based synchronization.
 \section{Background on State-based CRDTs}

A state-based CRDT can be defined as a triple
$(\crdt, \sqleq, \join)$ where $\crdt$ is a
join-semilattice (lattice for short, from now on),
$\sqleq$ is a partial order, and
$\join$ is a binary join operator that derives
the least upper bound for any two elements of $\crdt$.
State-based CRDTs are updated through a set of
\emph{mutators} designed to be inflations,
i.e. for mutator $\m$ and state $x \in \crdt$, we have $x \sqleq \m(x)$.

Synchronization of replicas is achieved by having
each replica periodically propagate 
its local state to other
neighbour
replicas. When a remote state is received, a replica updates its state to reflect
the join of its local state and the received state.
As the local state grows, more state needs to be sent, which might affect the
usage of system resources (such as network) with a negative
impact on the overall system performance.
Ideally, each replica should only propagate the most
recent modifications executed over its local state.

Delta-based CRDTs can be used to achieve this,
by defining \emph{delta-mutators} that
return a smaller state which, when merged with the current state, generates the
same result as applying the standard mutators, i.e.
each mutator $\m$ has in
delta-CRDTs
a corresponding $\delta$-mutator
$\dm$ such that:
\[
  \m(x) = x \join \dm(x)
\]

In this model, the deltas resulting from $\delta$-mutators
are added to a $\delta$-buffer,
in order to be propagated to neighbor replicas, as a $\delta$-group, at the next synchronization step.
When a $\delta$-group is received from a neighbor, it is also added
to the buffer for further propagation.
 \subsection{CRDT examples}
\label{subsec:examples}

In Figure \ref{fig:crdt-spec} we present the specification
of two simple state-based CRDTs,
defining their lattice states, mutators, corresponding $\delta$-mutators,
and the binary join operator $\join$.
These lattices
are typically bounded and thus a bottom value
$\bot$ is also defined.
(Note that the specifications
do not define the partial order $\sqleq$ since it can
always be defined,
for any lattice $\crdt$, in terms of $\join$:
$x \sqleq y \iff x \join y = y$.)

\begin{figure}[!ht]
  \begin{center}
  \begin{subfigure}{0.48\textwidth}
    \begin{align*}
      \af{GCounter} & = \ids \map \nat \\
      \bot & = \varnothing \\
      \af{inc}_i(p) & = p\{i \mapsto p(i) + 1\} \\
      \daf{inc}_i(p) & = \{i \mapsto p(i) + 1\} \\
      \af{value}(p) & = \sum \{ v | k \mapsto v \in p \} \\
      p \join p' & = \{k \mapsto \max(p(k), p'(k)) | k \in l \} \\
      & \textbf{where } l = \dom(p) \union \dom(p')
    \end{align*}
    \caption{Grow-only Counter.}
    \label{fig:gcounter-spec}
  \end{subfigure}
  \end{center}
  \hfill
  \begin{center}
  \begin{subfigure}{0.48\textwidth}
    \begin{align*}
      \af{GSet}\tup{E} & = \pow{E} \\
      \bot & = \varnothing \\
      \af{add}(e, s) & = s \union \{e\} \\
      \daf{add}(e, s) & =
      \begin{cases}
        \{ e \} & \textbf{if}\enskip e \not \in s \\
        \bot & \textbf{otherwise}
      \end{cases} \\
      \af{value}(s) & = s \\
      s \join s' & = s \union s'
    \end{align*}
    \caption{Grow-only Set.}
    \label{fig:gset-spec}
  \end{subfigure}
  \end{center}
\caption{Specifications of two data types, replica $i \in \ids$.}
\label{fig:crdt-spec}
\end{figure}
 \begin{figure}[!ht]
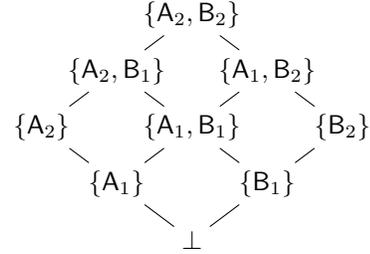
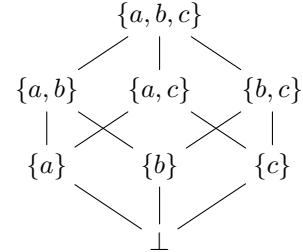

  \begin{center}
  \begin{subfigure}{0.48\textwidth}
    \centerxy{
			<0.5cm,0pt>:
      (0,6)*+{\countab{2}{2}}="atwobtwo";
      (-2,4.5)*+{\countab{2}{1}}="atwobone";
      (2,4.5)*+{\countab{1}{2}}="aonebtwo";
      (-4,3)*+{\counta{2}}="atwo";
      (0,3)*+{\countab{1}{1}}="aonebone";
      (4,3)*+{\countb{2}}="btwo";
      (-2,1.5)*+{\counta{1}}="aone";
      (2,1.5)*+{\countb{1}}="bone";
			(0,0)*+{\bot}="empty";
			"empty"; "aone" **\dir{-};
			"empty"; "bone" **\dir{-};
			"aone"; "atwo" **\dir{-};
			"bone"; "btwo" **\dir{-};
			"aone"; "aonebone" **\dir{-};
			"bone"; "aonebone" **\dir{-};
			"atwo"; "atwobone" **\dir{-};
			"aonebone"; "atwobone" **\dir{-};
			"btwo"; "aonebtwo" **\dir{-};
			"aonebone"; "aonebtwo" **\dir{-};
			"atwobone"; "atwobtwo" **\dir{-};
			"aonebtwo"; "atwobtwo" **\dir{-};
    }
    \caption{$\af{GCounter}$, with two replicas $\ids = \{\A, \B\}$.}
    \label{fig:gcounter-hasse}
  \end{subfigure}
  \end{center}
  \hfill
  \begin{center}
  \begin{subfigure}{0.48\textwidth}
    \centerxy{
      <0.5cm,0pt>:
      (0,6)*+{\setabc}="abc";
      (-3,4)*+{\setab}="ab";
      (0,4)*+{\setac}="ac";
      (3,4)*+{\setbc}="bc";
      (-3,2)*+{\seta}="a";
      (0,2)*+{\setb}="b";
      (3,2)*+{\setc}="c";
      (0,0)*+{\bot}="empty";
      "empty"; "a" **\dir{-};
      "empty"; "b" **\dir{-};
      "empty"; "c" **\dir{-};
      "a"; "ab" **\dir{-};
      "a"; "ac" **\dir{-};
      "b"; "ab" **\dir{-};
      "b"; "bc" **\dir{-};
      "c"; "ac" **\dir{-};
      "c"; "bc" **\dir{-};
      "ab"; "abc" **\dir{-};
      "ac"; "abc" **\dir{-};
      "bc"; "abc" **\dir{-};
    }
    \caption{$\af{GSet}\tup{\setabc}$.}
    \label{fig:gset-hasse}
  \end{subfigure}
  \end{center}
\caption{Hasse diagram of two data types.}
\label{fig:hasse}
\end{figure}

\newcommand{\setatob}{\{a, \underline b\}}
\newcommand{\setbtoa}{\{\underline a, \underline b, c\}}

\begin{figure*}[t]
  \centerxy{
\xymatrix@C=1.5em @R2em{
\A &
  \varnothing \ar@{->}[rr]^(0.45){\add_a} & & \seta
  \ar@{.}[rr] & & \setab
  \ar@{.}[r] & \bullet^2
  \ar@{->}[rd]^(.6){\setatob}
  \ar@{.}[rrr] & & & \setabc
\\
\B &
  \varnothing \ar@{->}[rr]^(0.45){\add_b} & & \setb
  \ar@{.}[r] & \bullet^1
  \ar@{->}[ru]_(.65){\setb}
  \ar@{.}[r] & \ar@{->}[r]^(0.45){\add_c} & \setbc
  \ar@{.}[r] & \setabc
  \ar@{.}[r] & \bullet^3
  \ar@{->}[ru]_(.65){\setbtoa}
}
}
  \caption{Delta-based synchronization of a $\af{GSet}$ with 2 replicas
  $\A, \B \in \ids$. Underlined elements represent the $\BP$ optimization.}
\label{fig:delta-ex1}
\end{figure*}
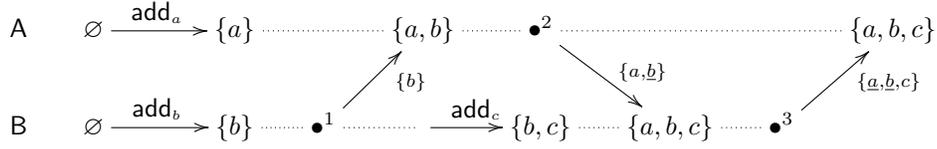
 
\newcommand{\setctod}{\{a, \overline b\}}

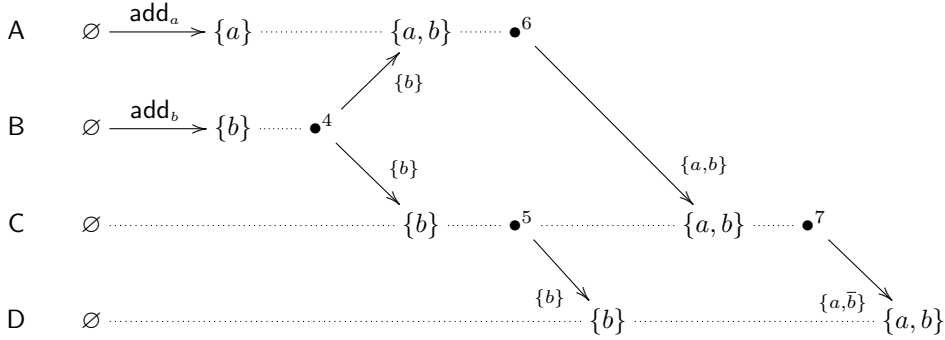
\begin{figure*}[t]
  \centerxy{
\xymatrix@C=1.5em @R2em{
\A &
\varnothing \ar@{->}[rr]^(0.45){\add_a} & & \seta
\ar@{.}[rr] & & \setab
\ar@{.}[r] & \bullet^6
\ar@{->}[rrdd]^(.78){\setab}
\\
\B &
\varnothing \ar@{->}[rr]^(0.45){\add_b} & & \setb
\ar@{.}[r] & \bullet^4
\ar@{->}[ru]_(.65){\setb} \ar@{->}[rd]^(.6){\setb}
\\
\C &
\varnothing
  \ar@{.}[rrrr] & & & & \setb
\ar@{.}[r] & \bullet^5
\ar@{->}[rd]_(.6){\setb}
  \ar@{.}[rr] & & \setab
\ar@{.}[r] & \bullet^7
\ar@{->}[rd]_(.6){\setctod}
\\
\D &
\varnothing
  \ar@{.}[rrrrrr] & & & & & & \setb
  \ar@{.}[rrr] & & & \setab
}
}
  \caption{Delta-based synchronization of a $\af{GSet}$ with 4 replicas
  $\A, \B, \C, \D \in \ids$. The overlined element represents the $\RR$ optimization.}
\label{fig:delta-ex2}
\end{figure*}
 
A CRDT counter that only allows increments is known as
a \emph{grow-only counter} (Figure \ref{fig:gcounter-spec}).
In this data type, the set of replica identifiers $\ids$
is mapped to the set of natural numbers $\nat$.
Increments are tracked per replica $i$, individually,
and stored in a map entry $p(i)$. The value of
the counter is the sum of each entry's value in the map.
Mutator $\af{inc}$ returns the updated map (the notation $p\{k \mapsto v\}$ indicates that only entry $k$ in the map $p$ is updated to a new value $v$, the remaining entries left unchanged), while
the $\delta$-mutator $\daf{inc}$ only returns the
updated entry. The join of two $\af{GCounter}$s computes,
for each key, the maximum of the associated values.

The lattice state evolution (either by mutation or
join of two states)
can also be understood by looking at
the corresponding Hasse diagram (Figure \ref{fig:hasse}).
For example,
state $\countab{1}{1}$ in Figure \ref{fig:gcounter-hasse}
(where $\A_1$ represents entry $\{\A \mapsto 1 \}$ in the map,
i.e. one increment registered by replica $\A$),
can result from an increment on $\counta{1}$ by $\B$,
from an increment on
$\countb{1}$
by $\A$, or from the join of these two states.

A \emph{grow-only set}, Figures \ref{fig:gset-spec}
and Figure \ref{fig:gset-hasse},
is a set data type that only allows element additions.
Mutator $\add$ returns the updated set, while $\daf{add}$
returns a singleton set with the added element
(in case it was not in the set already).
The join of two $\af{GSet}$s simply computes the set union.

Although we have chosen as running examples very simple CRDTs,
the results in this paper can be extended to more complex ones,
as we show in
Appendix \ref{app:compositions}.
For further coverage of delta-based CRDTs see \cite{deltas2}. 

 \subsection{Synchronization Cost Problem}

Figures \ref{fig:delta-ex1} and \ref{fig:delta-ex2} illustrate possible
distributed executions of the classic delta-based synchronization algorithm
\cite{deltas2}, with replicas of a \emph{grow-only-set}, all starting with a
bottom value $\bot = \varnothing$. (This classic algorithm is captured in
Algorithm \ref{alg:both}, covered in Section \ref{sec:revisited}.)
Synchronization with neighbors is represented by $\bullet$ and synchronization
arrows are labeled with the state sent, where we overline or underline
elements that are being redundantly sent and can be removed (thus improving
network bandwidth consumption) by employing two simple and novel optimizations
that we introduce next.

In Figure \ref{fig:delta-ex1}, we have two replicas $\A, \B \in \ids$
and each adds an element to the replicated set.
At $\bullet^1$, $\B$ propagates the content of the $\delta$-buffer,
i.e. $\setb$, to neighbour $\A$.
At $\bullet^2$, $\A$ sends to $B$ $\setab$,
i.e.
the join of $\seta$ from a local mutation,
and the received $\setb$ from $\B$,
even though $\setb$ came from $B$ itself.
By simply tracking the origin of each $\delta$-group in
the $\delta$-buffer, replicas can
\textbf{avoid back-propagation of $\delta$-groups} ($\BP$).

Before receiving $\setab$, $\B$ adds a new element $c$ to the set,
also adding $\setc$ to the $\delta$-buffer.
Upon receiving $\setab$, and since what was received produces changes in the local
state, $\B$ adds it to the $\delta$-buffer.
At $\bullet^3$, $\B$ propagates all new changes since last synchronization
with $\A$: $\setc$ from a local mutation, and $\setab$ from $\B$,
even though $\setab$ came from replica $\B$.
When $\A$ receives $\setabc$, it will also add it to the buffer to be further propagated.
Note that as long as this pattern keeps repeating (i.e. there's always a state change between
synchronizations), delta-based synchronization will propagate the same amount of state
as state-based synchronization would, representing no improvement.
This is illustrated in Figure \ref{fig:delta-ex1},
and demonstrated empirically in Section \ref{sec:eval}.

In Figure \ref{fig:delta-ex2}, we have four replicas $\A, \B, \C, \D \in \ids$,
and replicas $\A, \B$ add an element to the set.
At $\bullet^4$, $\B$ propagates the content of the $\delta$-buffer
to neighbours $\A$ and $\C$.
At $\bullet^5$, $\C$ propagates the received $\setb$ to $\D$.
At $\bullet^6$, $\A$ sends
the join of $\seta$ from a local mutation and the received $\setb$
to $\C$.
Upon receiving the $\delta$-group $\setab$, 
$\C$ adds it to the $\delta$-buffer
and sends it to $\D$ at $\bullet^7$.
However, part of this $\delta$-group has already been in the $\delta$-buffer
(namely $b$), and thus, has already been propagated.
This observation hints for another optimization:
\textbf{remove redundant state in received $\delta$-groups} ($\RR$),
before adding them to the $\delta$-buffer.

Both $\BP$ and $\RR$ optimizations are detailed in Section \ref{sec:revisited},
where we incorporate them into the delta-based synchronization algorithm
with few changes.
 \section{Join Decompositions and Optimal Deltas}
\label{sec:efficient}

In this section we introduce state decomposition in state-based CRDTs, by
exploiting the mathematical concept of \emph{irredundant join decompositions}
in lattices. We then demonstrate how this concept can be used
to derive deltas and delta-mutators that are optimal,
in the sense that they produce the
smallest delta-state possible.
In Section \ref{sec:revisited} we show how this
same concept plays a key role in the $\af{RR}$ optimization
briefly described in the previous section.
 \subsection{Join Decomposition of a State-based CRDT}
\label{subsec:jd}

\begin{definition}[Join-irreducible state]
  State $x \in \crdt$ is join-irreducible if it cannot
  result from the join of any finite set of states $F \subseteq \crdt$ not
  containing $x$:
\[
    x = \bigjoin F \implies x \in F
\]

\end{definition}

\begin{example}
Let the following $p_1$, $p_2$ and $p_3$ be $\af{GCounter}$ states,
and $s_1$, $s_2$ and $s_3$ be $\af{GSet}$ states.

\begin{center}
\begin{minipage}{.25\textwidth}
	\begin{itemize}
		\item[\rmark] $p_1 = \counta{5}$
		\item[\rmark] $p_2 = \countb{6}$
		\item[\wmark] $p_3 = \countab{5}{7}$
	\end{itemize}
\end{minipage}
\begin{minipage}{.23\textwidth}
	\begin{itemize}
		\item[\wmark] $s_1 = \bot$
		\item[\rmark] $s_2 = \seta$
		\item[\wmark] $s_3 = \setab$
	\end{itemize}
\end{minipage}
\end{center}

States $p_3$ and $s_3$ are not join-irreducible states,
since they can be decomposed into (i.e. result from the join of) two states
different from themselves: $\counta{5}$ and $\countb{7}$ for $p_3$,
$\seta$ and $\setb$ for $s_3$.
Bottom (e.g., $s_1$) is never join-irreducible, as it is the join over an empty
set $\bigjoin \varnothing$.

\end{example}

In a Hasse diagram of a finite lattice (e.g., in Figure \ref{fig:hasse})
the join-irreducibles are those elements with exactly one link below.
Given lattice $\crdt$, we use $\mathcal{J}(\crdt)$ for the set of all
join-irreducible elements of $\crdt$.

\begin{definition}[Join Decomposition]
Given a lattice state $x \in \crdt$, a set of join-irreducibles $D$
is a join decomposition \cite{birkhoff1937} of $x$
if its join produces $x$:
\[
  D \subseteq \mathcal{J}(\crdt) \land \bigjoin D = x
\]
\end{definition}

\begin{definition}[Irredundant Join Decomposition]
  A join decomposition D is irredundant if no element in it is redundant:
\[
  D' \subset D \implies \bigjoin D' \ple \bigjoin D
\]
\end{definition}

\begin{example}
\label{ex:jd}
Let $p = \countab{5}{7}$ be a $\af{GCounter}$ state,
$s = \setabc$ a $\af{GSet}$ state,
and consider the following sets of states as tentative decompositions of $p$
and $s$.

\begin{center}
\begin{minipage}{.25\textwidth}
	\begin{itemize}
		\item[\wmark] $P_1 = \{\counta{5}, \countb{6}\}$
    \item[\wmark] $P_2 = \{\counta{5}, \countb{6}, \countb{7}\}$
		\item[\wmark] $P_3 = \{\countab{5}{6}, \countb{7}\}$
		\item[\rmark] $P_4 = \{\counta{5}, \countb{7}\}$
	\end{itemize}
\end{minipage}
\begin{minipage}{.23\textwidth}
	\begin{itemize}
		\item[\wmark] $S_1 = \{\setb, \setc\}$
		\item[\wmark] $S_2 = \{\setab, \setb, \setc\}$
		\item[\wmark] $S_3 = \{\setab, \setc\}$
		\item[\rmark] $S_4 = \{\seta, \setb, \setc\}$
	\end{itemize}
\end{minipage}
\end{center}
  Only $P_4$ and $S_4$ are irredundant join decompositions of $p$ and $s$.
  $P_1$ and $S_1$ are not decompositions since their join does not result in $p$
  and $s$, respectively;
  $P_2$ and $S_2$ are decompositions but contain redundant elements,
  $\countb{6}$ and $\setb$, respectively;
  $P_3$ and $S_3$ do not have redundancy, but contain reducible elements
  ($S_2$ fails to be an irredundant join decomposition for the same reason,
  since its element $\setab$ is also reducible).
\end{example}

As we show in Appendix \ref{app:unique-jds} and \ref{app:compositions},
these irredundant decompositions exist, are unique,
and can be obtained for CRDTs used in practice.
Let $\dec{x}$ denote the unique decomposition of element $x$.
From the Birkhoff's Representation Theorem~\cite{latticesAndOrder},
decomposition $\dec{x}$ is given by the maximals of the
join-irreducibles below $x$:
\[
  \dec x = \max \{ r \in \mathcal J(\crdt) | r \pleq x \}
\]

As two examples,
given a $\af{GCounter}$ state $p$ and a $\af{GSet}$ state $s$,
their (quite trivial) irredundant decomposition is given by:
\[
  \hfill
  \dec{p} = \{ \{k \mapsto v\} | k \mapsto v \in p \}
  \qquad
  \dec{s} = \{ \{e\} | e \in s\}
  \hfill
\]

We argue that these techniques can be applied to most (practical) implementations of CRDTs 
found in industry. The interested reader can find generic decomposition rules in Appendix~\ref{app:decomposing}.
 \subsection{Optimal deltas and \texorpdfstring{$\delta$}{}-mutators}

Having a unique irredundant join decomposition, we can define a function which
gives the minimum delta, or ``difference'' in analogy to set difference,
between two states $a, b \in \crdt$:
\[
  \Delta(a, b) = \bigjoin \{ y \in \dec{a} | y \not \pleq b \}
\]
which when joined with $b$ gives $a \join b$, i.e. $\Delta(a, b) \join b = a \join b$.
It is minimum (and thus, optimal)
in the sense that it is smaller than any other $c$
which produces the same result:
$c \join b = a \join b \implies \Delta(a, b) \pleq c$.

If not carefully designed, $\delta$-mutators can be a source of redundancy
when the resulting $\delta$-state contains information that has already been incorporated
in the lattice state.
As an example, the original $\delta$-mutator $\daf{add}$ of $\af{GSet}$ presented in
\cite{deltas1} always returns a singleton set with the element to be added,
even if the element is already in the set
(in Figure \ref{fig:gset-spec} we have presented a definition of
$\daf{add}$ that is optimal). By resorting to function $\Delta$, minimum delta-mutators can be trivially derived
from a given mutator:
\[
  \dm(x) = \Delta(\m(x), x)
\]

 \section{Revisiting Delta-based Synchronization}
\label{sec:revisited}

Algorithm \ref{alg:both} formally describes
delta-based synchronization at replica $i$.
The algorithm contains lines that belong to
\HLBase{classic} delta-based synchronization \cite{deltas1,deltas2},
and lines with \HLOpt{$\BP$} and \HLOpt{$\RR$} optimizations,
while non-highlighted lines belong to both.
In classic delta-based synchronization,
each replica $i$ maintains a lattice state $x_i \in \crdt$
(\qline{alg:state}),
and a $\delta$-buffer $B_i \in \pow\crdt$ as a set of lattice states
(\qline{alg:buffer}).
When an update operation occurs (\qline{alg:delta-operation}),
the resulting $\delta$ is merged with the local state $x_i$ (\qline{alg:merge})
and added to the buffer (\qline{alg:add-buffer}),
resorting to function $\store$. Periodically,
the whole content of the $\delta$-buffer (\qline{alg:buffer-collect})
is propagated to neighbors (\qline{alg:buffer-send}).

For simplicity of presentation, we assume that communication channels between
replicas cannot drop messages (reordering and duplication is considered), and
that is why the buffer is cleared after each synchronization step
(\qline{alg:buffer-clear}). This assumption can be removed by simply tagging
each entry in the $\delta$-buffer with a unique sequence number, and by
exchanging acks between replicas: once an entry has been acknowledged by every
neighbour, it is removed from the $\delta$-buffer, as originally
proposed in \cite{deltas1}.

When a $\delta$-group is received
(\qline{alg:delta-receive}),
then it is checked whether it will induce an inflation in the local state
(\qline{alg:delta-check}). If this is the case, the $\delta$-group is merged with the local state and added to the buffer
(for further propagation),
resorting to the same function $\store$.
The precondition in \qline{alg:delta-check} appears to be harmless,
but it is in fact, the source of most redundant state propagated
in this synchronization algorithm. Detecting an inflation
is not enough, since almost always there's something new to
incorporate. Instead, synchronization algorithms must extract
from the received $\delta$-group the lattice state responsible for the
inflation, as done by the $\RR$ optimization.

Few changes are required in order to incorporate
this and the $\BP$ optimization in the classic
algorithm, as we show next. 
This happens because our approach encapsulates most of its complexity
in the computation of join decompositions and function $\Delta$.
The fact that few changes are required to the classic synchronization 
algorithm is a benefit, that will minimize the efforts in incorporating 
these techniques in existing implementations.

\algsinglecol{t}{
  \newcommand\tab{\hspace{0.7em}}
  \newcommand\block[3]{\leavevmode\rlap{\colorbox{#2}{\vphantom{$X_0^1$}\hspace{#1}}}\makebox[#1][l]{#3}}
  \newcommand\leftright[2]{\block{0.4\hsize}{\BaseColor}{#1}\block{0.57\hsize}{\OptColor}{ #2}}
  \SetKw{kwif}{if }
  \SetKw{kwfor}{for }
\inputs{
  $n_i \in \pow\ids$, set of neighbors \;
}
\vspace{0.2cm}
\state{
  $x_i \in \crdt$, $x_i^0 = \bot$ \; \label{alg:state}
  \leftright
  {$B_i \in \pow \crdt$, $B_i^0 = \varnothing$}
  {$B_i \in \pow{\crdt \times \ids}$, $B_i^0 = \varnothing$}
  \; \label{alg:buffer}
}
\vspace{0.2cm}
\on({$\operation_i(\dm)$}){
  \label{alg:delta-operation}
  $\delta = \dm(x_i)$ \;
  $\store(\delta, i)$ \;
}
\vspace{0.2cm}
\periodically(// synchronize){
  \label{alg:delta-ship}
  $\kwfor j \in n_i$ \;
    \leftright
    {\tab $d = \bigjoin B_i$}
    {$d = \bigjoin \{s | \tup{s, o} \in B_i \land o \not = j \}$}
    \; \label{alg:buffer-collect}
    \tab $\send_{i,j}(\af{delta}, d)$ \; \label{alg:buffer-send}
  $B_i' = \varnothing$ \; \label{alg:buffer-clear}
}
\vspace{0.2cm}
\on({$\receive_{j,i}(\af{delta}, d)$}){
  \label{alg:delta-receive}
  \leftright{}{$d = \Delta(d, x_i)$} \; \label{alg:delta-extract}
  \leftright{$\kwif d \not \sqleq x_i$}{$\kwif d \not = \bot$} \; \label{alg:delta-check}
    \tab $\store(d, j)$ \;
}
\vspace{0.2cm}
\fun({$\store(s, o)$}){
  $x_i' = x_i \join s$ \; \label{alg:merge}
  \leftright{$B_i' = B_i \union \{s\}$}{$B_i' = B_i \union \{\tup{s, o}\}$}
  \label{alg:add-buffer}
}
\vspace{0.2cm}
}
{Delta-based synchronization algorithms at replica $i \in \ids$:
\HLBase{classic} version
and version with \HLOpt{$\BP$} and \HLOpt{$\RR$} optimizations.}
{alg:both}
 
\paragraph*{Avoiding back-propagation of $\delta$-groups}
For $\BP$, each entry in the $\delta$-buffer is tagged
with its origin (\qline{alg:buffer} and \qline{alg:add-buffer}),
and at each synchronization step with neighbour $j$,
entries tagged with $j$ are filtered out
(\qline{alg:buffer-collect}).

\paragraph*{Removing redundant state in received $\delta$-groups}
A received $\delta$-group can contain redundant state,
i.e. state that has already been propagated to neighbors,
or state that is in the $\delta$-buffer $B_i$ still to be propagated.
This occurs in topologies where the underlying graph has cycles,
and thus,
nodes can receive the same information through different
paths in the graph.
In order to detect if a $\delta$-group has redundant state,
nodes do not need to keep everything in the $\delta$-buffer
or even inspect the $\delta$-buffer: it is enough to compare the
received $\delta$-group with the local lattice state $x_i$.
In classic delta-based synchronization,
received $\delta$-groups were added to
$\delta$-buffer only if they would strictly inflate
the local state (\qline{alg:delta-check}).
For $\RR$,
we extract from the $\delta$-group what strictly inflates
the local state $x_i$ (\qline{alg:delta-extract}),
and $\store$ it if it is different from bottom
(\qline{alg:delta-check}).
This extraction is achieved by selecting
which irreducible states
from the decomposition of the received $\delta$-group
strictly inflate the local state,
resorting to function $\Delta$
presented in Section \ref{sec:efficient}.

 \section{Evaluation}
\label{sec:eval}

In this Section we evaluate the proposed solutions and show the following:

\begin{itemize}
  \item Classic delta-based synchronization
    can be as inefficient as state-based synchronization in terms of
    transmission bandwidth, while incurring an overhead in terms of memory usage 
    required for synchronization (Section \ref{sub:micro}).
  \item In acyclic topologies, $\BP$ is enough to attain the best results,
    while in topologies with cycles, only $\RR$ can greatly reduce the synchronization
    cost (Section \ref{sub:micro}).
  \item Alternative synchronization techniques (such as Scuttlebutt \cite{scuttlebutt} and operation-based synchronization \cite{crdts1,crdts2}) are metadata-heavy; this metadata represents a large fraction of all the data required for synchronization (over 75\%)
    while for delta-based synchronization the metadata overhead can be as low as $7.7\%$
    (Section \ref{sub:micro}).
  \item In moderate-to-high contention workloads, $\BP$ + $\RR$
    can reduce transmission bandwidth and memory consumption by several GBs;
    when comparing with $\BP$ + $\RR$, classic delta-based synchronization has an unnecessary
    CPU overhead of up-to 7.9$\af{x}$ (Section \ref{sub:retwis}).
\end{itemize}

Instructions on how to reproduce all experiments can be found in
our public repository\footnote{\url{https://github.com/vitorenesduarte/exp}}.

\subsection{Experimental Setup}

The evaluation was conducted in a Kubernetes cluster deployed in Emulab \cite{emulab}.
Each machine has a Quad Core Intel Xeon 2.4 GHz and 12GB of RAM.
The number of machines in the cluster is set such that two replicas
are never scheduled to run in the same machine, i.e. there is at least
one machine available for each replica in the experiment.

\paragraph*{Network Topologies}
Figure \ref{fig:topologies} depicts the two network topologies
employed in the experiments:
a partial-mesh, in which each node has 4 neighbors;
and a tree, with 3 neighbors per node, with the exception of the root
node (2 neighbors) and leaf nodes (1 neighbor).
The first topology exhibits redundancy in the links and tests the effect of cycles in the
synchronization, while the second represents an optimal propagation scenario
over a spanning tree.

\begin{figure}[t]
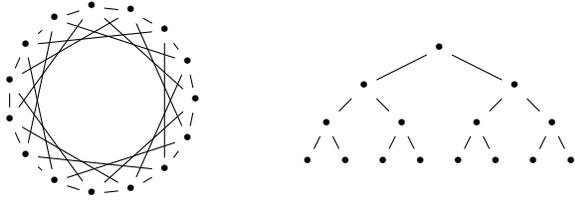

  \begin{minipage}{.24\textwidth}
  \centerxy{<0.5cm,0pt>:
  (2.5,0.0)*+{\vcenter{\hbox{\tiny$\bullet$}}}="0";
  (2.2838636441,1.0168416077)*+{\vcenter{\hbox{\tiny$\bullet$}}}="1";
  (1.6728265159,1.8578620637)*+{\vcenter{\hbox{\tiny$\bullet$}}}="2";
  (0.7725424859,2.3776412907)*+{\vcenter{\hbox{\tiny$\bullet$}}}="3";
  (-0.2613211582,2.4863047384)*+{\vcenter{\hbox{\tiny$\bullet$}}}="4";
  (-1.25,2.1650635095)*+{\vcenter{\hbox{\tiny$\bullet$}}}="5";
  (-2.0225424859,1.4694631307)*+{\vcenter{\hbox{\tiny$\bullet$}}}="6";
  (-2.4453690018,0.519779227)*+{\vcenter{\hbox{\tiny$\bullet$}}}="7";
  (-2.4453690018,-0.519779227)*+{\vcenter{\hbox{\tiny$\bullet$}}}="8";
  (-2.0225424859,-1.4694631307)*+{\vcenter{\hbox{\tiny$\bullet$}}}="9";
  (-1.25,-2.1650635095)*+{\vcenter{\hbox{\tiny$\bullet$}}}="10";
  (-0.2613211582,-2.4863047384)*+{\vcenter{\hbox{\tiny$\bullet$}}}="11";
  (0.7725424859,-2.3776412907)*+{\vcenter{\hbox{\tiny$\bullet$}}}="12";
  (1.6728265159,-1.8578620637)*+{\vcenter{\hbox{\tiny$\bullet$}}}="13";
  (2.2838636441,-1.0168416077)*+{\vcenter{\hbox{\tiny$\bullet$}}}="14";
  "5"; "9" **\dir{-};
  "10"; "11" **\dir{-};
  "4"; "8" **\dir{-};
  "5"; "6" **\dir{-};
  "0"; "14" **\dir{-};
  "8"; "9" **\dir{-};
  "3"; "7" **\dir{-};
  "7"; "11" **\dir{-};
  "3"; "14" **\dir{-};
  "1"; "2" **\dir{-};
  "6"; "7" **\dir{-};
  "12"; "13" **\dir{-};
  "6"; "10" **\dir{-};
  "0"; "11" **\dir{-};
  "1"; "5" **\dir{-};
  "2"; "13" **\dir{-};
  "0"; "4" **\dir{-};
  "2"; "6" **\dir{-};
  "4"; "5" **\dir{-};
  "9"; "10" **\dir{-};
  "2"; "3" **\dir{-};
  "11"; "12" **\dir{-};
  "0"; "1" **\dir{-};
  "9"; "13" **\dir{-};
  "1"; "12" **\dir{-};
  "8"; "12" **\dir{-};
  "13"; "14" **\dir{-};
  "3"; "4" **\dir{-};
  "10"; "14" **\dir{-};
  "7"; "8" **\dir{-};
}
   \end{minipage}
  \begin{minipage}{.24\textwidth}
  \centerxy{<0.5cm,0pt>:
  (-2.0,-1.0)*+{\vcenter{\hbox{\tiny$\bullet$}}}="(1, 2)";
  (0.0,0.0)*+{\vcenter{\hbox{\tiny$\bullet$}}}="(0, 1)";
  (2.5,-3.0)*+{\vcenter{\hbox{\tiny$\bullet$}}}="(3, 2)";
  (1.5,-3.0)*+{\vcenter{\hbox{\tiny$\bullet$}}}="(3, 3)";
  (3.5,-3.0)*+{\vcenter{\hbox{\tiny$\bullet$}}}="(3, 1)";
  (3.0,-2.0)*+{\vcenter{\hbox{\tiny$\bullet$}}}="(2, 1)";
  (2.0,-1.0)*+{\vcenter{\hbox{\tiny$\bullet$}}}="(1, 1)";
  (-3.5,-3.0)*+{\vcenter{\hbox{\tiny$\bullet$}}}="(3, 8)";
  (-1.0,-2.0)*+{\vcenter{\hbox{\tiny$\bullet$}}}="(2, 3)";
  (-1.5,-3.0)*+{\vcenter{\hbox{\tiny$\bullet$}}}="(3, 6)";
  (1.0,-2.0)*+{\vcenter{\hbox{\tiny$\bullet$}}}="(2, 2)";
  (-2.5,-3.0)*+{\vcenter{\hbox{\tiny$\bullet$}}}="(3, 7)";
  (0.5,-3.0)*+{\vcenter{\hbox{\tiny$\bullet$}}}="(3, 4)";
  (-3.0,-2.0)*+{\vcenter{\hbox{\tiny$\bullet$}}}="(2, 4)";
  (-0.5,-3.0)*+{\vcenter{\hbox{\tiny$\bullet$}}}="(3, 5)";
  "(2, 1)"; "(3, 1)" **\dir{-};
  "(0, 1)"; "(1, 2)" **\dir{-};
  "(2, 3)"; "(3, 6)" **\dir{-};
  "(1, 2)"; "(2, 3)" **\dir{-};
  "(2, 2)"; "(3, 4)" **\dir{-};
  "(0, 1)"; "(1, 1)" **\dir{-};
  "(2, 3)"; "(3, 5)" **\dir{-};
  "(2, 2)"; "(3, 3)" **\dir{-};
  "(1, 1)"; "(2, 2)" **\dir{-};
  "(1, 1)"; "(2, 1)" **\dir{-};
  "(2, 4)"; "(3, 7)" **\dir{-};
  "(2, 1)"; "(3, 2)" **\dir{-};
  "(1, 2)"; "(2, 4)" **\dir{-};
  "(2, 4)"; "(3, 8)" **\dir{-};
}
   \end{minipage}
  \caption{Network topologies employed:
  a 15-node partial-mesh (to the left)
  and a 15-node tree (to the right).}
  \label{fig:topologies}
\end{figure}
 
\subsection{Micro-Benchmarks}
\label{sub:micro}

We have designed a set of micro-benchmarks, in which
each node periodically
(every second)
synchronizes with neighbors and
executes an update operation over a CRDT. The update operation depends
on the CRDT type. In $\af{GSet}$, the update event is the addition of a globally
unique element to the set; 
in $\af{GCounter}$, an increment on the counter;
and in $\af{GMap\ K\%}$ each node updates $\frac{\af{K}}{\af{N}}\%$ keys
($\af{N}$ being the number of nodes/replicas),
such that globally $\af{K\%}$ of all the keys in the \emph{grow-only map}
are modified within each synchronization interval.
Note how the $\af{GCounter}$ benchmark is a particular case of $\af{GMap\ K\%}$,
in which $\af{K} = 100$.
For $\af{GMap\ K\%}$ we set the total number of keys to 1000,
and for all benchmarks, the number of events per replica is set to 100.

\newcommand\mltext[2]{
  \begin{minipage}{#1}
    \vspace*{.02cm}
    \begin{flushleft}
    #2
    \end{flushleft}
    \vspace*{-.42cm}
  \end{minipage}
}

\def\colalen{1.5cm}
\def\colblen{2.4cm}
\def\colclen{2.6cm}

\begin{table}[]
\caption{Description of micro-benchmarks.}
\label{tab:micro}
\begin{center}
  \begin{tabular}{c c p{\colclen}}
\toprule
    \mltext{\colalen}{\textbf{Type}}
      & \mltext{\colblen}{\textbf{Periodic event}}
      & \mltext{\colclen}{\textbf{Measurement}} \\ \toprule
    \mltext{\colalen}{$\af{GCounter}$}
      & \mltext{\colblen}{single increment}
      & \mltext{\colclen}{number of entries in the map} \\ \midrule
    \mltext{\colalen}{$\af{GSet}$}
      & \mltext{\colblen}{addition of unique element}
      & \mltext{\colclen}{number of elements in the set} \\ \midrule
    \mltext{\colalen}{$\af{GMap\ K\%}$}
      & \mltext{\colblen}{change the value of $\frac{\af{K}}{\af{N}}\%$ keys}
      & \mltext{\colclen}{number of entries in the map} \\ \bottomrule
\end{tabular}
\end{center}
\end{table}
 
These micro-benchmarks are summarized in Table \ref{tab:micro},
along with the metric (to be used in transmission and memory measurements)
we have defined:
for $\af{GCounter}$ and $\af{GMap\ K\%}$ we count the number of map entries,
while for $\af{GSet}$, the number of set elements.
We setup this part of the evaluation with 15-node topologies
(as in Figure \ref{fig:topologies}).
As baselines, we have state-based synchronization,
classic delta-based synchronization,
Scuttlebutt, a variation of Scuttlebutt,
and operation-based synchronization.

\paragraph*{Scuttlebutt}
Scuttlebutt \cite{scuttlebutt} is an anti-entropy protocol
used to reconcile changes in values of a key-value store. Each value is uniquely identified
with a version $\tup{i, s} \in \ids \times \nat$, where the first component $i \in \ids$
is the identifier of the replica responsible for the new value,
and $s \in \nat$ a sequence number, incremented on each local update, thus being unique.
With this, the updates known locally can be summarized by a vector $\ids \map \nat$,
mapping each replica to the highest sequence number it knows.
When a node wants to reconcile with a neighbor replica, it sends the summary vector,
and the neighbor replies with all the key-value pairs it has locally that have versions
not summarized in the received vector.
This strategy is performed in both directions, and in the end, both replicas
have the same key-value pairs in their local key-value store
(assuming no new updates occurred).

Scuttlebutt can be used to synchronize state-based CRDTs with few modifications.
Using as values the CRDT state would be inefficient, since changes to the CRDT
wouldn't be propagated incrementally, i.e. a small change in the CRDT would require
sending the whole new state, as in state-based synchronization.
Therefore, we use as values the optimal deltas resulting from $\delta$-mutators.
As keys, we can simply resort to the version pairs.
When reconciling two replicas, a replica receiving new key-delta pairs,
merges all the deltas with the local CRDT. If CRDT updates stop, eventually
all replicas converge to the same CRDT state. We label this approach \texttt{Scuttlebutt}.

\begin{figure}[t]
  \begin{center}
  \includegraphics[width=.46\textwidth,keepaspectratio]{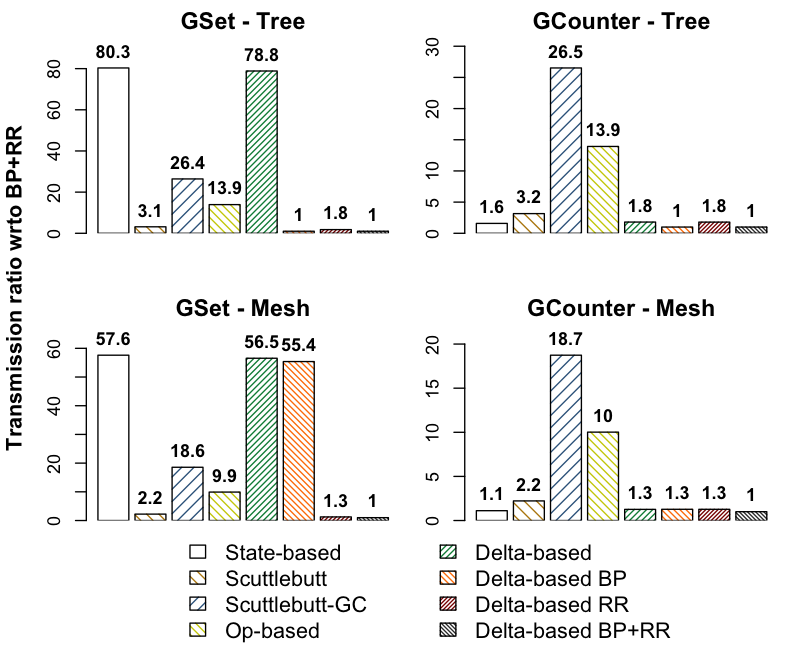}
  \end{center}
  \caption{Transmission of $\af{GSet}$ and $\af{GCounter}$
  with respect to delta-based $\BP + \RR$
  -- tree and mesh topologies.}
\label{fig:gset-gcounter-transmission}
\end{figure}

\begin{figure*}[t]
  \begin{center}
  \includegraphics[width=.92\textwidth,keepaspectratio]{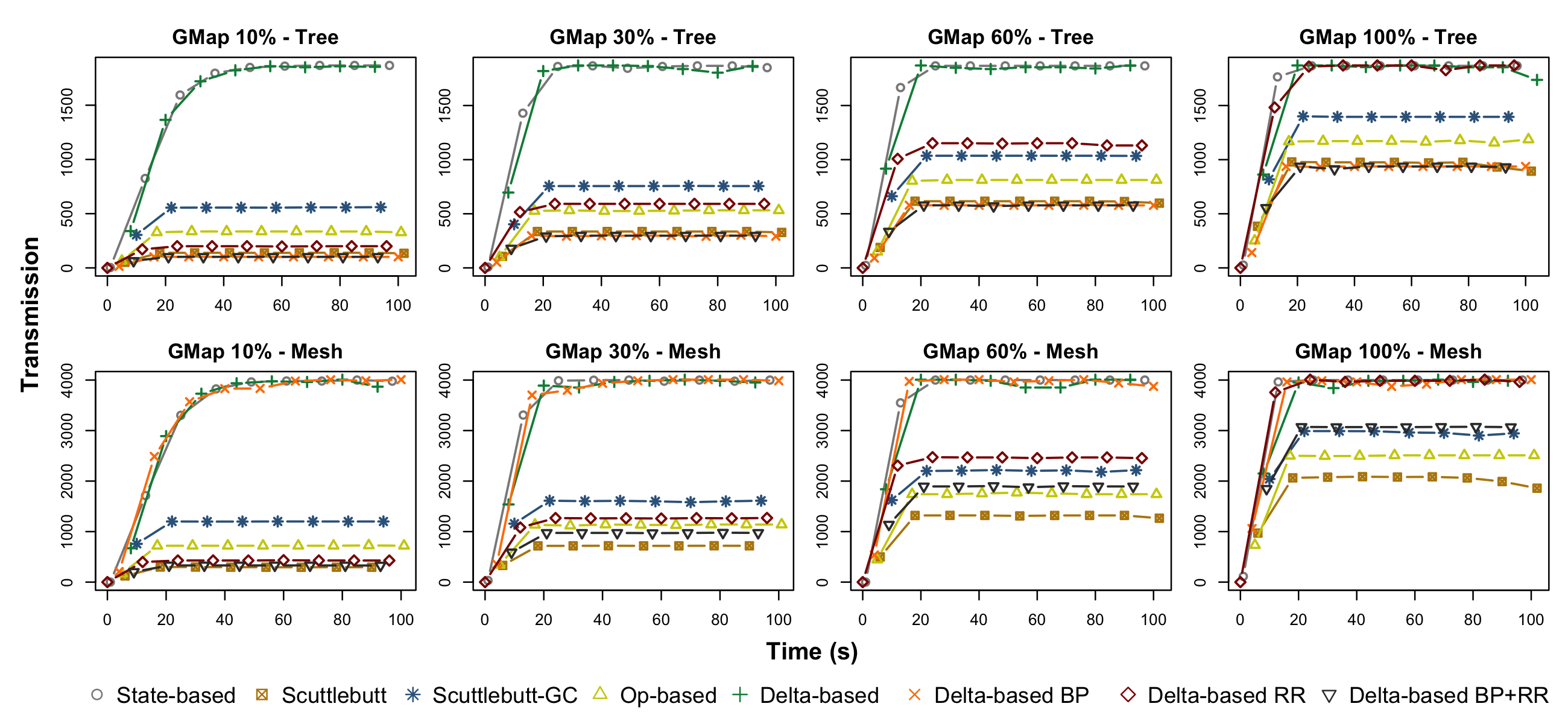}
  \end{center}
  \caption{Transmission of $\af{GMap\ 10\%}$,
  $\af{30\%}$, $\af{60\%}$ and $\af{100\%}$
  -- tree and mesh topologies.}
\label{fig:gmap-transmission}
\end{figure*}

This strategy is potentially inefficient in terms of memory: a replica has to keep in the
Scuttlebutt key-value store \emph{all} the deltas it has ever seen, since a neighbor
replica can at any point in time send a summary vector asking for \emph{any} delta.
Since the original Scuttlebutt algorithm does not support deleting keys from the key-value
store, we add support for \emph{safe} deletes of deltas,
in order to reduce its memory footprint.
If each node keeps track of what each node in the system has seen
(in a map $\ids \map (\ids \map \nat)$ from replica identifiers to
the last seen summary vector),
once a delta has been seen by all nodes,
it can be safely deleted from the local Scuttlebutt store.
We compare with this improved Scuttlebutt variant (labeled \texttt{Scuttlebutt-GC})
that allows nodes to only be connected
to a subset of all nodes, not requiring all-to-all connectivity,
while supporting safe deletes.
For completeness, we also compare with the original Scuttlebutt design
that is unable to garbage-collect unnecessary key-delta pairs.

\paragraph*{Operation-based}

Operation-based CRDTs \cite{crdts1,crdts2} resort to a causal broadcast
middleware~\cite{causal-multicast-survey} that is used to disseminate CRDT operations.
This middleware tags each operation with a vector clock that
summarizes the causal past of the operation.
Such vector is then used by the recipient to ensure causal delivery of operations,
i.e. each operation is only delivered when every operation in its causal past
has been delivered as well.

In topologies with all-to-all connectivity, each node is only responsible for disseminating
its own operations. In order to relax this requirement, we have implemented a middleware
that \emph{stores-and-forwards} operations: when an operation is seen for the first time, it is
added to a transmission buffer to be further propagated in the next synchronization step;
if the same operation is received from different incoming neighbors, the middleware simply
updates which nodes have seen this operation so that unnecessary transmissions are avoided.
To the best of our knowledge, this is the best possible implementation of such a middleware. 
We label this approach \texttt{Op-based}.

\subsubsection{Transmission bandwidth}

Figure \ref{fig:gset-gcounter-transmission}
shows, for $\af{GSet}$ and $\af{GCounter}$,
the transmission ratio (of all synchronization mechanisms previously mentioned)
with respect to delta-based synchronization with $\BP$ and $\RR$ optimizations enabled.
The first observation is that
classic delta-based synchronization presents almost
no improvement,
when compared to state-based synchronization.
In the tree topology, $\BP$ is enough to attain the best result,
because the underlying topology does not have cycles,
and thus, $\BP$ is sufficient to prevent redundant state to be propagated.
With a partial-mesh, $\BP$ has little effect,
and $\RR$ contributes most to the overall improvement.
Given that the underlying topology leads to redundant communication
(desired for fault-tolerance),
and classic delta-based can never extract that redundancy,
its transmission bandwidth is effectively similar to that of state-based synchronization.

Scuttlebutt and Scuttlebutt-GC are more efficient than classic delta-based for $\af{GSet}$
since both can precisely identify state changes between synchronization
rounds. However, the results for $\af{GCounter}$ reveal a limitation of this approach.
Since Scuttlebutt treats propagated values as opaque,
and does not understand that the changes in a $\af{GCounter}$ compress naturally
under lattice joins (only the highest sequence for each replica needs to be kept),
it effectively behaves worse than state-based and classic delta-based in this case.
Operation-based synchronization follows the \emph{same trend}
for the \emph{same reason}:
it improves state-based and classic delta-based for $\af{GSet}$
but not for $\af{GCounter}$ since the middleware is unable to compress
multiple operations into a single, equivalent, operation.
Supporting generic operation-compression at the middleware level
in operation-based CRDTs is an open research problem.
The difference between these three approaches is related with
the metadata cost associated to each,
as we show in Section~\ref{subsub:metadata}.

Even with the optimizations $\BP + \RR$ proposed,
the best result for $\af{GCounter}$ is not much
better than state-based.
This is expected since most entries of the underlying map are being updated
between each synchronization step:
each node has almost always something new from every other node
in the system to propagate
(thus being similar to state-based in some cases).
This pattern represents a special case of a map
in which $\af{100\%}$ of its keys are updated between
state synchronizations.

In Figure \ref{fig:gmap-transmission} we study
other update patterns, by measuring
the transmission of $\af{GMap\ 10\%}$, $\af{30\%}$, $\af{60\%}$,
and $\af{100\%}$.
These results are further evidence of what we have
observed
in the case of $\af{GSet}$:
$\BP$ suffices if the network graph is acyclic,
but $\RR$ is crucial in the more general case.

As seen previously, Scuttlebutt and Scuttlebutt-GC behave much better
than state-based synchronization, yielding a reduction in the transmission cost between
$46\%$ and $91\%$, and $20\%$ and $65\%$, respectively. This is
due to the underlying precise reconciliation mechanism of Scuttlebutt.
Operation-based synchronization leads to a transmission reduction between
$35\%$ and $80\%$ since it is able to represent
incremental changes to the CRDT as small operations.
Finally, delta-based $\BP + \RR$ is able reduce the transmission
costs by up-to $94\%$.

In the extreme case of $\af{GMap\ 100\%}$
(every key in the map is updated between synchronization rounds, which is a less likely workload in practical systems)
and considering a partial-mesh, delta-based $\BP + \RR$
provides a modest improvement in relation to state-based of about $18\%$
less transmission, and its performance is below Scuttlebutt variants and
operation-based synchronization.

Vector-based protocols (Scuttlebutt and operation-based) however, have an inherent scalability problem.
When increasing the number of nodes in the system, the transmission costs may
become
dominated by the size of metadata required for synchronization, as we show next.

\begin{figure}[t]
  \begin{center}
  \includegraphics[width=.32\textwidth,keepaspectratio]{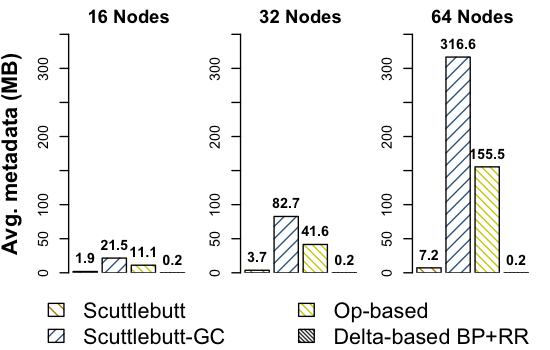}
  \end{center}
  \caption{Metadata required per node when synchronizing a $\af{GSet}$
  in a mesh topology. Each node has 4 neighbours (as in
  Figure \ref{fig:topologies}) and each node identifier has size 20B.}
  \label{fig:metadata}
\end{figure}

\begin{figure}[t]
  \begin{center}
  \includegraphics[width=.46\textwidth,keepaspectratio]{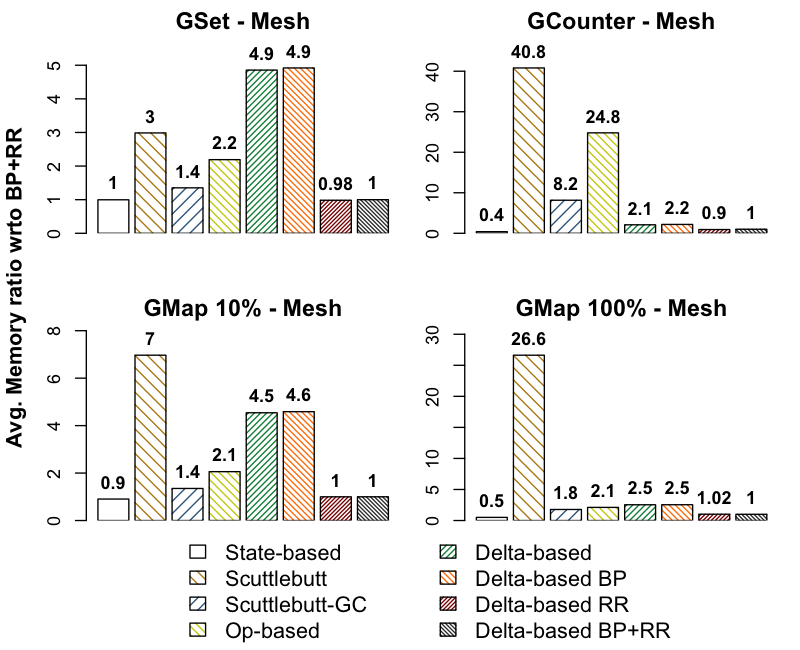}
  \end{center}
  \caption{Average memory ratio with respect to $\BP + \RR$
  for $\af{GCounter}$, $\af{GSet}$,
  $\af{GMap\ 10\%}$ and $\af{100\%}$
  -- mesh topology}
\label{fig:memory}
\end{figure}

\subsubsection{Metadata Cost}\label{subsub:metadata}
Figure~\ref{fig:metadata} shows the size of metadata required
for synchronization per node while varying the
total number of replicas (i.e. nodes). The results show a linear and quadratic cost 
(in terms of number of nodes) for Scuttlebutt and Scuttlebutt-GC (respectively),
and a 
linear cost for operation-based synchronization
(in terms of both number of nodes and pending updates still to be propagated).
Given $N$ nodes, $P$ neighbors, and $U$ pending updates,
the metadata cost per node is:
\begin{itemize}
  \item Scuttlebutt: $NP$ (a vector per neighbor)
  \item Scuttlebutt-GC: $N^2 P$ (a map of vectors per neighbor)
  \item Operation-based: $NPU$ (a vector per neighbor per pending update)
  \item Delta-based: $P$ (a sequence number per neighbor)
\end{itemize}

This cost may represent a large fraction of all data propagated during
synchronization.
For example, in our measurements with 32 nodes, this metadata
represents $75\%$, $99\%$, and $97\%$ of the transmission costs
for Scuttlebutt, Scuttlebutt-GC and operation-based, respectively,
while the overhead of delta-based synchronization is only $7.7\%$.

\subsubsection{Memory footprint}
In delta-based synchronization,
the size of $\delta$-groups being propagated not only affects
the network bandwidth consumption, but also
the memory required to store them in the $\delta$-buffer
for further propagation.
During the experiments, we periodically measure
the amount of state (both CRDT state and metadata required
for synchronization) stored in memory for each node.

Figure~\ref{fig:memory} reports the average
memory ratio with respect to $\BP + \RR$.
State-based does not require synchronization metadata,
and thus it is optimal in terms of memory usage.
Classic delta-based and delta-based $\BP$ have an overhead
of 1.1$\af{x}$-3.9$\af{x}$ since the size of $\delta$-groups in the $\delta$-buffer
is larger for these techniques.
For $\af{GSet}$ and $\af{GMap\ 10\%}$, Scuttlebutt-GC is close to $\BP + \RR$
since deltas are removed from the key-value store
as soon as they are seen by all replicas.
Key-delta pairs are never pruned in the original Scuttlebutt, leading to an increasing memory usage. 
As long as new updates exist, the memory consumption
for Scuttlebutt can only deteriorate, ultimately to a point where it will disrupt the system operation.
Operation-based has a higher memory cost than Scuttlebutt-GC, since each
operation in the transmission buffer is tagged with a vector,
while in Scuttlebutt and Scuttlebutt-GC each delta is simply tagged with a version pair.

Considering the results for $\af{GCounter}$, the three vector-based algorithms exhibit
the highest memory consumption. This is justified by 
the same reason they perform poorly
in terms of transmission bandwidth
in this case (Figure~\ref{fig:gset-gcounter-transmission}):
these protocols are unable to compress incremental changes.
Overall, and ignoring state-based which doesn't present any metadata memory costs,
$\BP + \RR$ attains the best results.

 \subsection{Retwis Application}
\label{sub:retwis}

\def\colalen{1.5cm}
\def\colblen{2cm}
\def\colclen{1.7cm}

\begin{table}[]
\caption{Retwis workload characterization: for each operation, the number of
  CRDT updates performed and its workload percentage.}
\label{tab:retwis}
\begin{center}
  \begin{tabular}{c c p{\colclen}}
\toprule
    \mltext{\colalen}{\textbf{Operation}}
      & \mltext{\colblen}{\textbf{\#Updates}}
      & \mltext{\colclen}{\textbf{Workload \%}} \\ \toprule
    \mltext{\colalen}{Follow}
      & \mltext{\colblen}{1}
      & \mltext{\colclen}{15\%} \\ \midrule
    \mltext{\colalen}{Post Tweet}
      & \mltext{\colblen}{1 + \#Followers}
      & \mltext{\colclen}{35\%} \\ \midrule
    \mltext{\colalen}{Timeline}
      & \mltext{\colblen}{0}
      & \mltext{\colclen}{50\%} \\ \bottomrule
\end{tabular}
\end{center}
\end{table}

We now compare classic delta-based with
delta-based $\BP + \RR$ using Retwis \cite{retwis},
a popular \cite{tapir,walter,tardis} open-source Twitter clone.
In Table \ref{tab:retwis} we describe the application workload,
similar to the one used in \cite{tapir}:
user $a$ can follow user $b$ by updating the set of followers of user $b$;
users can post a new tweet, by writing it in their wall and in the timeline
of all their followers;
and finally, users can read their timeline, fetching the 10 most recent tweets.

Each user has 3 objects associated with it:
1) a set of followers stored in a $\af{GSet}$;
2) a wall stored in a $\af{GMap}$ mapping tweet identifiers to their content; and
3) a timeline stored in a $\af{GMap}$ mapping tweet timestamps to tweet identifiers.
We run this benchmark with 10K users, and thus, 30K CRDT objects overall.
The size of tweet identifiers and content is 31B and 270B, respectively.
These sizes are representative of real workloads,
as shown in an analysis of Facebook's general-purpose key-value store
\cite{facebook-workload}. The topology is a partial-mesh, with 50 nodes,
each with 4 neighbors, as in Figure \ref{fig:topologies},
and updates on objects follow a Zipf distribution, with 
coefficients
ranging from 0.5 (low contention) to 1.5 (high contention)
\cite{tapir}.

\begin{figure}[t]
  \begin{center}
  \includegraphics[width=.48\textwidth,keepaspectratio]{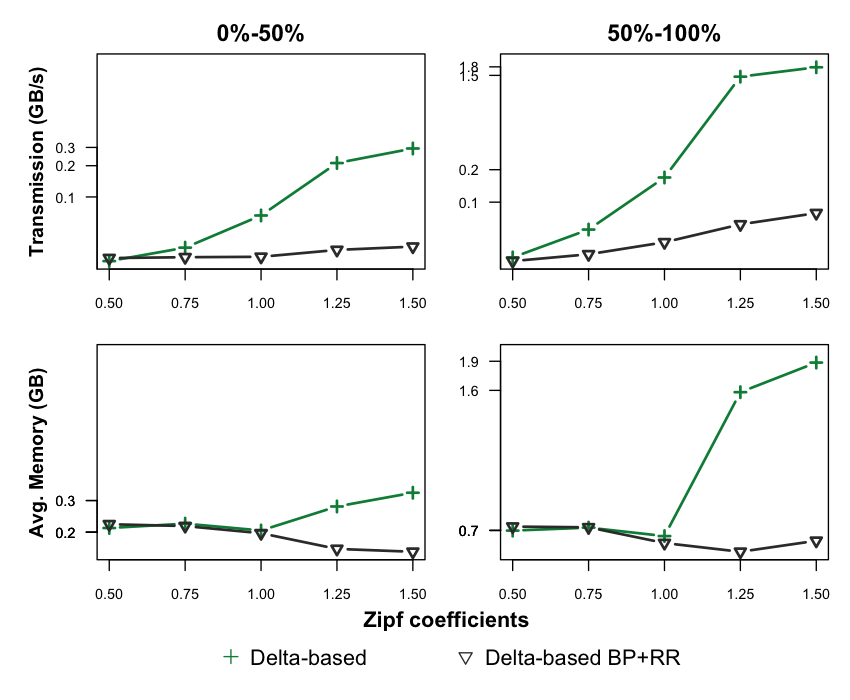}
  \end{center}
  \caption{Transmission bandwidth per node (top) and average memory per node (bottom) of
  classic delta-based and $\BP + \RR$ for different Zipf coefficient values
  (log scale).
  The left and right side show these values for
  the first and second half of the experiment (respectively).}
  \label{fig:retwis}
\end{figure}

Figure~\ref{fig:retwis}
shows the transmission bandwidth and memory footprint of both algorithms,
for different Zipf coefficient values.
We can observe that in low contention workloads, classic delta-based
behaves almost optimally when compared to $\BP + \RR$.
Since updates are distributed almost evenly across all objects,
there are few concurrent updates to the same object between synchronization rounds,
and thus, the simple and naive inflation check in~\qline{alg:delta-check}
suffices.
This phenomena was not observed in the previous set of benchmarks,
since we had a single object, and thus, maximum contention.

As we increase contention, a more sophisticated approach like $\BP + \RR$ is required,
in order to avoid redundant state propagation.
For example, with a 1.25 coefficient, bandwidth is reduced from $1.46$GB/s to $0.06$GB/s per node,
and memory footprint per node drops from $1.58$GB to $0.62$GB (right side of the plots).
Also, as we increase the Zipf coefficient, we note that the bandwidth consumption continues to 
rise, leading to an unsustainable situation in the case of classic delta-based,
as it can never reduce the size of $\delta$-groups being transmitted.

\begin{figure}[t]
  \begin{center}
  \includegraphics[width=.34\textwidth,keepaspectratio]{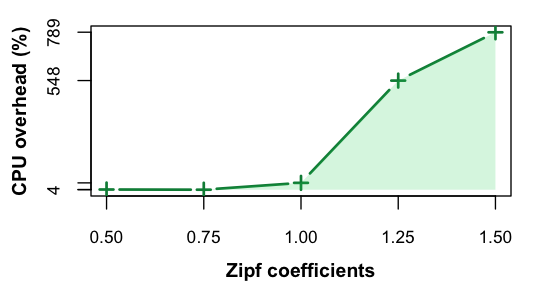}
  \end{center}
  \caption{CPU overhead of classic delta-based when compared to delta-based $\af{BP} + \af{RR}$.}
  \label{fig:retwis-processing}
\end{figure}

During the experiment we also measured the CPU time spent in processing CRDT
updates, both producing and processing synchronization messages.
Figure \ref{fig:retwis-processing} reports the CPU overhead of classic delta-based,
when considering $\BP + \RR$ as baseline.
Since classic delta-based produces/processes larger messages
than $\BP + \RR$, this results in a higher CPU cost:
for the 1, 1.25 and 1.5 Zipf coefficients, classic delta-based incurs an overhead
of 0.4$\af{x}$, 5.5$\af{x}$, and 7.9$\af{x}$ respectively.
  \section{Related Work}

In the context of remote file synchronization, \emph{rsync} \cite{rsync}
synchronizes two files placed on different machines,
by generating file block signatures,
and using these signatures to identify the missing blocks on the backup file.
In this strategy, there's a trade-off between the size of the blocks to be signed,
the number of signatures to be sent, and the size of the blocks to be received:
bigger blocks to be signed implies fewer signatures to be sent, but
the blocks received (deltas) can be bigger than necessary.
Inspired by \emph{rsync}, \emph{Xdelta}~\cite{xdelta} computes a difference between two files,
taking advantage of the fact that both files are present.
Consequently the cost of sending signatures can be ignored
and the produced deltas are optimized.

In \cite{join-decompositions}, we propose two techniques that can be used
to synchronize two state-based CRDTs after a network partition,
avoiding bidirectional full state transmission.
Let $\A$ and $\B$ be two replicas.
In \emph{state-driven} synchronization, $\A$
starts by sending its local lattice state to $\B$,
and given this state, $\B$ is able to compute a delta that reflects the updates missed by
$\A$.
In \emph{digest-driven} synchronization, $\A$ starts by sending a digest (signature)
of its local state (smaller than the local state), that still allows
$\B$ to compute the delta. $\B$ then sends the computed delta along with a digest
of its local state, allowing $\A$ to compute a delta for $\B$.
Convergence is achieved after 2 and 3 messages
in \emph{state-driven} and \emph{digest-driven}, respectively.
These two techniques also exploit the concept of join decomposition presented in this paper.

Similarly to \emph{digest-driven} synchronization,
$\Delta$-CRDTs \cite{making-deltas}
exchange metadata used to compute a delta that reflects missing updates.
In this approach, CRDTs need to be extended to maintain
additional metadata for delta derivation,
and if this metadata needs to be garbage collected,
the mechanism falls-back to standard bidirectional full state transmission.

In the context of anti-entropy gossip protocols,
\emph{Scuttlebutt} \cite{scuttlebutt}
proposes a \emph{push-pull} algorithm
to be used to synchronize a set of values between participants,
but considers each value as opaque, and does not try to represent
recent changes to these values as deltas. Other solutions try to minimize the communication overhead
of anti-entropy gossip-based protocols by exploiting either hash functions~\cite{clearhouse}
or a combination of Bloom filters, Merkle trees, and Patricia tries~\cite{Byers}. Still,
these solutions require a significant number of message exchanges to identify the source
of divergence between the state of two processes. Additionally, these solutions might incur 
significant processing overhead due to the need of computing hash functions and manipulating
complex data structures, such as Merkle trees.

With the exception of \emph{Xdelta}, all these techniques do not assume knowledge prior
to synchronization, and thus delay reconciliation,
by always exchanging state digests in order to detect state divergence.

 \section{Conclusion} \label{sec:conclusion}

Under geo-replication there is a significant availability and latency impact
\cite{abadi-cap} when aiming for strong consistency criteria such as linearizability
\cite{herlihy-linearizability}.
Strong consistency guarantees greatly simplify the programmers view of the system and are
still required for operations that do demand global synchronization.
However, several other system's components do not need that same level of coordination
and can reap the benefits of fast local operation and strong eventual consistency.
This requires capturing more information on each data type semantics, since a read/write
abstraction becomes limiting for the purpose of data reconciliation.
CRDTs can provide a sound approach to these highly available solutions and support the
existing industry solutions for geo-replication, which are still mostly grounded on state-based CRDTs. 

State-based CRDT solutions quickly become prohibitive in practice, if there is no support for
treatment of small incremental state deltas. In this paper we advance the foundations of
state-based CRDTs by introducing minimal deltas that precisely track state changes.
We also present and micro-benchmark two optimizations,
\emph{avoid back-propagation of $\delta$-groups} and
\emph{remove redundant state in received $\delta$-groups},
that solve inefficiencies in classic delta-based synchronization algorithms.
Further evaluation shows the improvement our solution can bring to
a small scale Twitter clone deployed in a 50-node cluster,
a relevant application scenario.

 \section*{Acknowledgments}
We would like to thank Ricardo Macedo, Georges Younes, Marc Shapiro and the anonymous reviewers for their valuable feedback on earlier drafts of this work.
Vitor Enes was supported by EU H2020 LightKone project (732505) and by a FCT - Funda{\c{c}}{\~{a}}o para a Ci{\^{e}}ncia e a Tecnologia - PhD Fellowship (PD/BD/142927/2018). Carlos Baquero was partially supported by SMILES within TEC4Growth project (NORTE-01-0145-FEDER-000020). Jo\~{a}o Leit\~{a}o was partially supported by project NG-STORAGE through FCT grant PTDC/CCI-INF/32038/2017,
and by NOVA LINCS through the FCT grant UID/CEC/04516/2013.
 
\bibliographystyle{IEEEtran}
\bibliography{bib}

\begin{thebibliography}{10}
\providecommand{\url}[1]{#1}
\csname url@samestyle\endcsname
\providecommand{\newblock}{\relax}
\providecommand{\bibinfo}[2]{#2}
\providecommand{\BIBentrySTDinterwordspacing}{\spaceskip=0pt\relax}
\providecommand{\BIBentryALTinterwordstretchfactor}{4}
\providecommand{\BIBentryALTinterwordspacing}{\spaceskip=\fontdimen2\font plus
\BIBentryALTinterwordstretchfactor\fontdimen3\font minus
  \fontdimen4\font\relax}
\providecommand{\BIBforeignlanguage}[2]{{%
\expandafter\ifx\csname l@#1\endcsname\relax
\typeout{** WARNING: IEEEtran.bst: No hyphenation pattern has been}%
\typeout{** loaded for the language `#1'. Using the pattern for}%
\typeout{** the default language instead.}%
\else
\language=\csname l@#1\endcsname
\fi
#2}}
\providecommand{\BIBdecl}{\relax}
\BIBdecl

\bibitem{abadi-cap}
D.~Abadi, ``{Consistency Tradeoffs in Modern Distributed Database System
  Design: CAP is Only Part of the Story},'' in \emph{Computer}, 2012.

\bibitem{brewer-cap}
E.~Brewer, ``{A Certain Freedom: Thoughts on the CAP Theorem},'' in
  \emph{PODC}, 2010.

\bibitem{gilbert-cap}
S.~Gilbert and N.~Lynch, ``{Brewer's Conjecture and the Feasibility of
  Consistent, Available, Partition-tolerant Web Services},'' in \emph{SIGACT
  News}, 2002.

\bibitem{golab-pacelc}
W.~Golab, ``{Proving PACELC},'' in \emph{SIGACT News}, 2018.

\bibitem{consistency1}
P.~Ajoux, N.~Bronson, S.~Kumar, W.~Lloyd, and K.~Veeraraghavan, ``{Challenges
  to Adopting Stronger Consistency at Scale},'' in \emph{HOTOS}, 2015.

\bibitem{consistency2}
H.~Lu, K.~Veeraraghavan, P.~Ajoux, J.~Hunt, Y.~J. Song, W.~Tobagus, S.~Kumar,
  and W.~Lloyd, ``{Existential Consistency: Measuring and Understanding
  Consistency at Facebook},'' in \emph{SOSP}, 2015.

\bibitem{crdts1}
M.~Shapiro, N.~M. Pregui{\c{c}}a, C.~Baquero, and M.~Zawirski, ``{Conflict-Free
  Replicated Data Types},'' in \emph{SSS}, 2011.

\bibitem{crdts2}
M.~Shapiro, N.~M. Pregui{\c{c}}a, C.~Baquero, and M.~Zawirski, ``{Convergent
  and Commutative Replicated Data Types},'' in \emph{Bulletin of the {EATCS}},
  2011.

\bibitem{riak}
\BIBentryALTinterwordspacing
Basho, ``{Riak KV Concepts: Data Types}.'' [Online]. Available:
  \url{http://docs.basho.com/riak/kv/2.2.3/learn/concepts/crdts/}
\BIBentrySTDinterwordspacing

\bibitem{redis}
\BIBentryALTinterwordspacing
R.~Labs, ``{Under the Hood: Redis CRDTs}.'' [Online]. Available:
  \url{https://redislabs.com/docs/active-active-whitepaper/}
\BIBentrySTDinterwordspacing

\bibitem{cosmosdb}
\BIBentryALTinterwordspacing
M.~Azure, ``{Multi-master at global scale with Azure Cosmos DB}.'' [Online].
  Available:
  \url{https://docs.microsoft.com/en-us/azure/cosmos-db/multi-region-writers}
\BIBentrySTDinterwordspacing

\bibitem{pure-op}
\BIBentryALTinterwordspacing
C.~Baquero, P.~S. Almeida, and A.~Shoker, ``{Pure Operation-Based Replicated
  Data Types},'' \emph{CoRR}, 2017. [Online]. Available:
  \url{http://arxiv.org/abs/1710.04469}
\BIBentrySTDinterwordspacing

\bibitem{deltas1}
P.~S. Almeida, A.~Shoker, and C.~Baquero, ``{Efficient State-Based CRDTs by
  Delta-Mutation},'' in \emph{NETYS}, 2015.

\bibitem{deltas2}
P.~S. Almeida, A.~Shoker, and C.~Baquero, ``{Delta State Replicated Data
  Types},'' in \emph{J. Parallel Distrib. Comput.}, 2018.

\bibitem{akka-data}
\BIBentryALTinterwordspacing
Akka, ``{Distributed Data}.'' [Online]. Available:
  \url{https://doc.akka.io/docs/akka/2.5/scala/distributed-data.html}
\BIBentrySTDinterwordspacing

\bibitem{ipfs}
\BIBentryALTinterwordspacing
IPFS, ``{Decentralized Real-Time Collaborative Documents}.'' [Online].
  Available: \url{https://ipfs.io/blog/30-js-ipfs-crdts.md}
\BIBentrySTDinterwordspacing

\bibitem{ipfs-deltas}
\BIBentryALTinterwordspacing
IPFS, ``{CRDT Research Repository}.'' [Online]. Available:
  \url{https://github.com/ipfs/research-CRDT/issues/31}
\BIBentrySTDinterwordspacing

\bibitem{birkhoff1937}
G.~Birkhoff, ``{Rings of sets},'' in \emph{Duke Mathematical Journal}, 1937.

\bibitem{latticesAndOrder}
B.~A. Davey and H.~A. Priestley, ``{Introduction to Lattices and
  Order}.''\hskip 1em plus 0.5em minus 0.4em\relax Cambridge University Press,
  1990.

\bibitem{scuttlebutt}
R.~van Renesse, D.~Dumitriu, V.~Gough, and C.~Thomas, ``{Efficient
  Reconciliation and Flow Control for Anti-entropy Protocols},'' in
  \emph{LADIS}, 2008.

\bibitem{emulab}
B.~White, J.~Lepreau, L.~Stoller, R.~Ricci, S.~Guruprasad, M.~Newbold,
  M.~Hibler, C.~Barb, and A.~Joglekar, ``{An Integrated Experimental
  Environment for Distributed Systems and Networks},'' in \emph{SIGOPS Oper.
  Syst. Rev.}, 2002.

\bibitem{causal-multicast-survey}
R.~Juan-Mar\'{\i}n, H.~Decker, J.~E. Armend\'{a}riz-\'{I}\~{n}igo, J.~M.
  Bernab{\'e}u-Aub\'{a}n, and F.~D. Mu\~{n}oz Esco\'{\i}, ``{Scalability
  Approaches for Causal Multicast: A Survey},'' in \emph{Distributed
  Computing}, 2016.

\bibitem{retwis}
\BIBentryALTinterwordspacing
Retwis. [Online]. Available: \url{http://retwis.antirez.com}
\BIBentrySTDinterwordspacing

\bibitem{tapir}
I.~Zhang, N.~K. Sharma, A.~Szekeres, A.~Krishnamurthy, and D.~R.~K. Ports,
  ``{Building Consistent Transactions with Inconsistent Replication},'' in
  \emph{SOSP}, 2015.

\bibitem{walter}
Y.~Sovran, R.~Power, M.~K. Aguilera, and J.~Li, ``{Transactional storage for
  geo-replicated systems},'' in \emph{SOSP}, 2011.

\bibitem{tardis}
N.~Crooks, Y.~Pu, N.~Estrada, T.~Gupta, L.~Alvisi, and A.~Clement, ``{TARDiS:
  {A} Branch-and-Merge Approach To Weak Consistency},'' in \emph{SIGMOD}, 2016.

\bibitem{facebook-workload}
B.~Atikoglu, Y.~Xu, E.~Frachtenberg, S.~Jiang, and M.~Paleczny, ``{Workload
  Analysis of a Large-Scale Key-Value Store},'' in \emph{SIGMETRICS}, 2012.

\bibitem{rsync}
A.~Tridgell and P.~Mackerras, ``{The rsync algorithm},'' Australian National
  University, Tech. Rep., 1998.

\bibitem{xdelta}
\BIBentryALTinterwordspacing
J.~Macdonald, ``{Xdelta}.'' [Online]. Available: \url{http://xdelta.org}
\BIBentrySTDinterwordspacing

\bibitem{join-decompositions}
V.~Enes, C.~Baquero, P.~S. Almeida, and A.~Shoker, ``{Join Decompositions for
  Efficient Synchronization of CRDTs after a Network Partition: Work in
  progress report},'' in \emph{PMLDC@ECOOP}, 2016.

\bibitem{making-deltas}
A.~van~der Linde, J.~Leit{\~a}o, and N.~Pregui{\c c}a, ``{{$\Delta$}-CRDTs:
  Making {$\Delta$}-CRDTs Delta-based},'' in \emph{PaPoC@EuroSys}, 2016.

\bibitem{clearhouse}
A.~Demers, D.~Greene, C.~Hauser, W.~Irish, J.~Larson, S.~Shenker, H.~Sturgis,
  D.~Swinehart, and D.~Terry, ``{Epidemic Algorithms for Replicated Database
  Maintenance},'' in \emph{PODC}, 1987.

\bibitem{Byers}
J.~Byers, J.~Considine, and M.~Mitzenmacher, ``{Fast Approximate Reconciliation
  of Set Differences},'' CS Dept., Boston University, Tech. Rep., 2002.

\bibitem{herlihy-linearizability}
M.~P. Herlihy and J.~M. Wing, ``{Linearizability: A Correctness Condition for
  Concurrent Objects},'' in \emph{Trans. Program. Lang. Syst.}, 1990.

\bibitem{crdt-composition}
C.~Baquero, P.~S. Almeida, A.~Cunha, and C.~Ferreira, ``{Composition in
  State-based Replicated Data Types},'' in \emph{Bulletin of the {EATCS}},
  2017.

\bibitem{single-writer}
V.~Enes, P.~S. Almeida, and C.~Baquero, ``{The Single-Writer Principle in CRDT
  Composition},'' in \emph{PMLDC@ECOOP}, 2017.

\bibitem{cassandra-counters}
\BIBentryALTinterwordspacing
DataStax, ``{What's New in Cassandra 2.1: Better Implementation of Counters}.''
  [Online]. Available:
  \url{https://www.datastax.com/dev/blog/whats-new-in-cassandra-2-1-a-better-implementation-of-counters}
\BIBentrySTDinterwordspacing

\end{thebibliography}

\appendix

\subsection{Existence of Unique Irredundant Decompositions}
\label{app:unique-jds}

In this section we present sufficient conditions for
the existence of unique irredundant join decompositions,
and show how they can be obtained.

\begin{definition}[Descending chain condition]
  A lattice $\crdt$ satisfies the \emph{descending chain condition} (DCC)
  if any sequence $x_1 \pgt x_2 \pgt \cdots \pgt x_n \pgt \cdots$
  of elements in $\crdt$ has finite length~\cite{latticesAndOrder}.
\end{definition}

\begin{proposition}
\label{prop:unique}
In a distributive lattice $\crdt$ satisfying DCC every element $x \in \crdt$ has a unique
irredundant join decomposition.
\end{proposition}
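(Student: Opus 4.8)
The plan is to separate the two assertions---existence and uniqueness---and to lean on a different hypothesis for each: DCC will drive existence, while distributivity will drive uniqueness. First I would establish existence of \emph{some} finite join decomposition for every element by well-founded induction; then I would reduce any such decomposition to an irredundant one; and finally I would prove that distributivity forces this irredundant decomposition to be unique. The closed form $\dec{x} = \max\{r \in \mathcal{J}(\crdt) \mid r \pleq x\}$ stated after the definitions is then recovered as a corollary.

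For existence I would observe that DCC is precisely well-foundedness of the strict order $\ple$, so every nonempty subset of $\crdt$ has a minimal element and induction over $\ple$ is available. I prove that every $x \in \crdt$ is the join of a finite set of join-irreducibles by this induction. The base cases are $x = \bot$ (the empty join $\bigjoin \varnothing$) and $x$ itself join-irreducible (the singleton $\{x\}$). Otherwise $x$ is, by the negation of join-irreducibility, the join $\bigjoin F$ of some finite $F$ with $x \not\in F$; every $y \in F$ satisfies $y \pleq x$ and $y \neq x$, hence $y \ple x$, so the induction hypothesis supplies each $y$ with a finite decomposition and their union decomposes $x$. From any finite decomposition $D$ I then delete redundant elements (those $y$ with $y \pleq \bigjoin(D \setminus \{y\})$) one at a time; finiteness guarantees termination at an irredundant decomposition, using that irredundancy is equivalent to no single element being removable without strictly lowering the join.

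The crux is uniqueness, and this is where distributivity is indispensable. The key lemma is that in a distributive lattice every join-irreducible $j$ is join-prime: if $j \pleq a_1 \join \cdots \join a_n$ then $j \pleq a_i$ for some $i$. This follows by writing $j = j \meet (a_1 \join \cdots \join a_n) = (j \meet a_1) \join \cdots \join (j \meet a_n)$ via distributivity, and then applying join-irreducibility to the finite set $\{j \meet a_1, \dots, j \meet a_n\}$ to conclude $j = j \meet a_i$, i.e. $j \pleq a_i$, for some $i$. Given two irredundant decompositions $D_1, D_2$ of $x$, I take any $j \in D_1$; since $j \pleq x = \bigjoin D_2$, primeness yields $j \pleq j'$ for some $j' \in D_2$, and likewise $j' \pleq j''$ for some $j'' \in D_1$. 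Then $j \pleq j''$ with both in $D_1$ forces $j = j''$ by irredundancy (otherwise $j \pleq j'' \pleq \bigjoin(D_1 \setminus \{j\})$ would make $j$ redundant), which collapses the chain to $j = j' \in D_2$. Hence $D_1 \subseteq D_2$, and by symmetry $D_1 = D_2$.

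I expect the main obstacle to be this uniqueness step---specifically, making the join-prime lemma interact cleanly with irredundancy. The collapsing-chain argument must be stated carefully, since it is the only place the irredundancy hypothesis is consumed and it silently relies on monotonicity of $\bigjoin$ over subsets. A minor remaining loose end is identifying the unique decomposition with the closed form $\dec{x}$: the maximal join-irreducibles below $x$ are pairwise incomparable and, by primeness, form an irredundant set whose join is $x$, so the uniqueness just proved forces them to coincide with the unique decomposition, tying the result back to the Birkhoff representation theorem cited in the text.
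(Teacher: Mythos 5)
Your proposal is correct, but it takes a genuinely different route from the paper, because the paper does not actually give an argument: it disposes of the proposition in one line, as a corollary of the dual of Theorem~6 of Birkhoff's 1937 paper (noting that a distributive lattice is modular and invoking DCC). What you wrote is, in effect, a self-contained proof of that cited classical result, in its standard form: existence of a finite decomposition by well-founded induction on $\ple$ (exactly what DCC licenses), pruning to an irredundant decomposition using finiteness, and uniqueness via the lemma that join-irreducible implies join-prime in a distributive lattice, with the chain $j \pleq j' \pleq j''$ collapsed by irredundancy. The individual steps are sound: the induction correctly isolates the cases $x = \bot$ and $x \in \mathcal{J}(\crdt)$; the computation $j = (j \meet a_1) \join \cdots \join (j \meet a_n)$ followed by an appeal to join-irreducibility is the right proof of primeness; and observing that $j \pleq j''$ with $j \neq j''$ would make $j$ redundant in $D_1$ is precisely where irredundancy must be consumed. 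What each approach buys: the paper's citation is economical and rests on an established theorem; your proof is checkable without chasing the reference and makes visible that DCC is used only for existence and distributivity only for uniqueness. Two minor caveats, neither a defect relative to the paper's own proof. First, your uniqueness argument (like Birkhoff's theorem itself) compares \emph{finite} irredundant decompositions, since the prime lemma is applied to the finite join $\bigjoin D_2$; the paper's definition of join decomposition does not literally say ``finite'', but that looseness is inherited from the paper, and under DCC infinite joins need not even exist. Second, the closed form $\dec{x} = \max \{ r \in \mathcal{J}(\crdt) \mid r \pleq x \}$ is the paper's Proposition~2 (stated there for finite lattices), a separate statement; your sketch works, though the cleanest finish is to show via primeness that every element of the unique irredundant decomposition is maximal among the join-irreducibles below $x$, and conversely, rather than first proving that the join of those maximals is $x$.
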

\begin{proof}
Trivial, as corollary of the dual of Theorem 6 from~\cite{birkhoff1937}: a distributive
lattice is modular; if it also satisfies DCC, then each element has a unique
irredundant join decomposition.
\end{proof}

For almost all CRDTs used in practice, the state is
not merely a join-semilattice, but a distributive lattice satisfying DCC (Appendix \ref{app:compositions}).
Therefore, from Proposition \ref{prop:unique},
we have a unique irredundant join decomposition for each CRDT state.
Let $\dec{x}$ denote this unique decomposition of an element $x$.

\begin{proposition}
  \label{prop:dec}
  If $\crdt$ is a finite distributive lattice, then $\dec{x}$ is given by the
  maximals of the join-irreducibles below $x$:
\[
  \dec x = \max \{ r \in \mathcal J(\crdt) | r \pleq x \}
\]
\end{proposition}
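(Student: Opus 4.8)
The plan is to prove the claim by showing that the candidate set
$M \defeq \max\{ r \in \mathcal J(\crdt) \mid r \pleq x \}$ is itself an irredundant join decomposition of $x$, and then appeal to uniqueness. A finite lattice vacuously satisfies DCC, so Proposition \ref{prop:unique} applies and tells us that $x$ has a \emph{unique} irredundant join decomposition $\dec{x}$. Consequently, once I verify that $M$ meets both defining conditions (it is a join decomposition, and it is irredundant), uniqueness forces $\dec{x} = M$ and the proof is complete.

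First I would check that $M$ is a join decomposition. That $M \subseteq \mathcal J(\crdt)$ is immediate from its definition. For $\bigjoin M = x$, I would invoke the standard Birkhoff fact that in a finite lattice every element is the join of the join-irreducibles beneath it, so $x = \bigjoin\{ r \in \mathcal J(\crdt) \mid r \pleq x \}$. Since $\crdt$ is finite this index set is finite, and in a finite poset every element of a subset lies below some maximal element of that subset; hence every $r \pleq x$ satisfies $r \pleq r'$ for some $r' \in M$, giving $\bigjoin M = \bigjoin\{ r \in \mathcal J(\crdt) \mid r \pleq x \} = x$.

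The crux is irredundancy, and I expect the one real obstacle to be the lemma that join-irreducibles are join-prime in a distributive lattice. To exploit it, I would first reduce: it suffices to show $\bigjoin(M \setminus \{r\}) \ple x$ for each $r \in M$, since any proper subset $D' \subset M$ is contained in some $M \setminus \{r\}$ and monotonicity of $\bigjoin$ then propagates strictness. Suppose toward a contradiction that $\bigjoin(M \setminus \{r\}) = x$ for some $r$; then $r \pleq \bigjoin(M \setminus \{r\})$. Now the key lemma: if $r$ is join-irreducible and $r \pleq a \join b$, then by distributivity $r = r \meet (a \join b) = (r \meet a) \join (r \meet b)$, whence join-irreducibility yields $r \pleq a$ or $r \pleq b$; iterating gives the analogous statement for finite joins. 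Applying this to $r \pleq \bigjoin(M \setminus \{r\})$ produces some $r' \in M \setminus \{r\}$ with $r \pleq r'$ and $r' \neq r$, contradicting the maximality of $r$ among join-irreducibles below $x$.

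Hence no element of $M$ is redundant, so $M$ is an irredundant join decomposition of $x$, and by the uniqueness from Proposition \ref{prop:unique} we conclude $\dec{x} = M$. Everything outside the join-prime lemma is routine bookkeeping about finite posets and monotonicity of the join; the lemma itself is a short distributivity computation, but it is the single place where the distributivity hypothesis is genuinely used.
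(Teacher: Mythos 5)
Your proof is correct, and it takes a genuinely more self-contained route than the paper's. The paper proves this proposition in one sentence, by citing Birkhoff's Representation Theorem: each $x$ corresponds to the down-set $\{ r \in \mathcal{J}(\crdt) \mid r \pleq x \}$, which is in turn determined by its set of maximals, ``containing no redundant element''; neither defining property of an irredundant join decomposition is checked explicitly against the paper's definitions, and uniqueness is invoked only implicitly through the notation $\dec{x}$. You instead verify the candidate $M = \max \{ r \in \mathcal{J}(\crdt) \mid r \pleq x \}$ directly and close with uniqueness: finiteness gives DCC, so Proposition \ref{prop:unique} applies; the decomposition property $\bigjoin M = x$ follows from the finite-lattice fact that every element is the join of the join-irreducibles below it, with non-maximal irreducibles absorbed by maximal ones; and irredundancy follows from the join-prime lemma for join-irreducibles in distributive lattices, played against maximality of the elements of $M$. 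Those two lemmas are precisely the ingredients from which Birkhoff's theorem is assembled, so the underlying mathematics coincides, but the proof architecture differs: the paper reads the answer off the representation theorem, while you reconstruct it and let Proposition \ref{prop:unique} force the equality $\dec{x} = M$. What your version buys is rigor relative to the paper's own definitions and a clear accounting of where each hypothesis is spent --- finiteness for the decomposition property (no distributivity needed there), distributivity for join-primeness and, via Proposition \ref{prop:unique}, for uniqueness; what the paper's version buys is brevity, at the price of delegating the actual verification to the textbook. One micro-case worth a word in your write-up: if $x = \bot$ then $M = \varnothing$ and everything holds vacuously, since $\bigjoin \varnothing = \bot$ and no join-irreducible lies below $\bot$.
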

\begin{proof}
From the Birkhoff's Representation Theorem (see, e.g.,~\cite{latticesAndOrder}),
each element $x$ is isomorphic to $\{ r \in \mathcal J(\crdt) | r \pleq x \}$,
  the set of join-irreducibles below it, which is isomorphic to the set
  of its maximals, containing no redundant element.
\end{proof}

\begin{table*}[t]
\caption{Composition techniques that yield lattices satisfying DCC and
  distributive lattices, given lattices $A$ and $B$, chain $C$, partial order
  $P$ and (unordered) set $U$.}
\label{tab:dcc-dist}

\begin{center}
\begin{tabular}{c|c|c|c|c|c|c|c|}
\cline{2-8}
\multicolumn{1}{l|}{} & \multicolumn{7}{c|}{$\crdt$} \\ \cline{2-8} 
  & $A \times B$ & $A \boxtimes B$ & $C \boxtimes A$ & $A \oplus B$ & $U {\hookrightarrow} A$ & $\pow{U}$ & $\mpow{P}$ \\ \hline
\multicolumn{1}{|c|}{$A, B, P \textrm{ has DCC} \implies \crdt \textrm{ has DCC}$}
  & \rmark & \rmark & \rmark & \rmark & \rmark& \rmark  & \rmark \\ \hline
\multicolumn{1}{|c|}{$A, B \textrm{ distributive} \implies \crdt \textrm{ distributive}$}
  & \rmark & \wmark & \rmark & \rmark & \rmark& \rmark  & \rmark \\ \hline
\end{tabular}
\end{center}

\end{table*}
 \begin{table*}[t]
\caption{Composition techniques that yield finite ideals or quotients, given
  lattices $A$ and $B$, chain $C$, partial order $P$, all satisfying DCC, and (unordered) set $U$.}
\label{tab:finite}

\begin{center}
\begin{tabular}{c|c|c|c|c|c|c|c|}
\cline{2-8}
\multicolumn{1}{l|}{} & \multicolumn{7}{c|}{$\crdt$} \\ \cline{2-8} 
  & $A \times B$ & $A \boxtimes B$ & $C \boxtimes A$ & $A \oplus B$ & $U {\hookrightarrow} A$ & $\pow{U}$ & $\mpow{P}$ \\ \hline
\multicolumn{1}{|c|}{$\forall x \in \crdt \cdot x / \bot \textrm{ finite}$}
  & \rmark & \wmark & \wmark & \wmark & \rmark & \rmark  & \rmark \\ \hline
\multicolumn{1}{|c|}{$\forall \tup{x, y} \in \crdt \cdot \tup{x, y} / \tup{x, \bot} \textrm{ finite}$}
  & --     & \rmark & \rmark & \rmark     & --     & --      & --     \\ \hline
\end{tabular}
\end{center}

\end{table*}
 
Although Proposition \ref{prop:dec} is stated for finite lattices, 
it can be applied to typical CRDTs defined over infinite lattices,
as we show next.
 \subsection{Lattice Compositions in CRDTs}
\label{app:compositions}

We now show that unique irredundant join decompositions (and therefore,
optimal deltas and delta-mutators) can be obtained for almost all state-based
CRDTs used in practice. Most CRDT designs define the lattice state starting
from lattice chains (booleans and natural numbers), unordered sets,
partial orders, and obtain more complex states by lattice composition
through: cartesian product $\times$, lexicographic product $\boxtimes$, linear
sum $\oplus$, finite functions ${\hookrightarrow}$ from a set to a lattice, powersets
$\mathcal{P}$, and sets of maximal elements $\mathcal{M}$ (in a partial
order).
Note that two of the constructs, ${\hookrightarrow}$ and $\mathcal{P}$,
were used in Section \ref{subsec:examples}
to define $\af{GCounter}$ and $\af{GSet}$, respectively.
The use of these composition techniques and a catalog of CRDTs is presented
in~\cite{crdt-composition} but that presentation (as well as CRDT designs in
general) simply considers building join-semilattices (typically with bottom)
from join-semilattices, never examining whether the result is more than a
join-semilattice.

\newcommand\maximalsbelow{
  \{
    \tup{\seta, \seta},
    \tup{\seta, \setb},
    \tup{\setb, \seta},
    \tup{\setb, \setb}
  \}
}

In fact, all those constructs yield lattices with bottom when starting
from lattices with bottom. Moreover, all these constructs yield
lattices satisfying DCC, when starting from lattices satisfying DCC (such as
booleans and naturals). Also, it is easily seen that most yield distributive
lattices when applied to distributive lattices, with the exception of
the lexicographic product with an arbitrary first component.
As an example,
in Figure~\ref{fig:lex-dist} we depict the Hasse diagram of a non-distributive
lexicographic pair.
This lattice is non-distributive since, e.g.,
for $x = \tup{\seta, \seta}$, $y = \tup{\seta, \varnothing}$
and $z = \tup{\setb, \varnothing}$,
we have $x = x \meet (y \join z) \neq (x \meet y) \join (x \meet z) = y$.
For the join-reducible $\tup{\setab, \varnothing}$, the set of the maximals of
the join-irreducibles below it (i.e. $\maximalsbelow$) is a redundant decomposition (as well as some
of its subsets), and there are several alternative irredundant decompositions:

\begin{figure}[t]
  \centerxy{
    <0.5cm,0pt>:
    (0,7)*+{}="top";
    (0,6)*+{\tup{ \setab, \varnothing }}="abab";
    (-4,4.5)*+{\tup{ \seta, \setab }}="aab";
    (4,4.5)*+{\tup{ \setb, \setab }}="bab";
    (-6,3)*+{\tup{ \seta, \seta }}="aa";
    (-2,3)*+{\tup{ \seta, \setb }}="ab";
    (2,3)*+{\tup{ \setb, \seta }}="ba";
    (6,3)*+{\tup{ \setb, \setb }}="bb";
    (-4,1.5)*+{\tup{ \seta, \varnothing }}="a0";
    (4,1.5)*+{\tup{ \setb, \varnothing }}="b0";
    (0,0)*+{\tup{ \varnothing, \varnothing }}="00";
    "00"; "a0" **\dir{-};
    "00"; "b0" **\dir{-};
    "a0"; "aa" **\dir{-};
    "a0"; "ab" **\dir{-};
    "b0"; "ba" **\dir{-};
    "b0"; "bb" **\dir{-};
    "aa"; "aab" **\dir{-};
    "ab"; "aab" **\dir{-};
    "ba"; "bab" **\dir{-};
    "bb"; "bab" **\dir{-};
    "aab"; "abab" **\dir{-};
    "bab"; "abab" **\dir{-};
    "abab"; "top" **\dir{.};
  }
  \caption{Hasse diagram of $\pow\setab \boxtimes \pow\setab$, a non-distributive lattice.}
  \label{fig:lex-dist}
\end{figure}
 
\begin{center}
  \vspace{5pt}
\begin{minipage}{.47\linewidth}
\begin{itemize}
  \item $\{\tup{\seta, \varnothing}, \tup{\setb, \varnothing}\}$
  \item $\{\tup{\seta, \varnothing}, \tup{\setb, \seta}\}$
  \item $\{\tup{\seta, \varnothing}, \tup{\setb, \setb}\}$
\end{itemize}
\end{minipage}
\begin{minipage}{.47\linewidth}
\begin{itemize}
  \item $\{\tup{\seta, \seta}, \tup{\setb, \varnothing}\}$
  \item $\dots$
  \item $\{\tup{\seta, \setb}, \tup{\setb, \setb}\}$
\end{itemize}
\end{minipage}
  \vspace{4pt}
\end{center}

Fortunately,
the typical use of lexicographic products to design CRDTs is with a chain
(total order) as the first component, to allow an actor which is ``owner'' of
part of the state (the single-writer principle~\cite{single-writer})
to either inflate the second component, or to change it to some arbitrary value,
while increasing a ``version number'' (first component).
This principle is followed by Cassandra counters \cite{cassandra-counters}.
In such typical usages of the lexicographic product, with a chain as first component, the
distributivity of the second component is propagated to the resulting construct.
Table \ref{tab:dcc-dist} summarizes these remarks about how almost always
these CRDT composition techniques yield lattices satisfying DCC and distributive
lattices,
and thus, have unique irredundant decompositions, by Proposition \ref{prop:unique}.

Having DCC and distributivity, even if it always occurs in practice, is not
enough to directly apply Proposition~\ref{prop:dec}, as it holds for finite lattices.
However if the sublattice given by the ideal $\down x = \{ y |  y \pleq x \}$
is finite, then we can apply that proposition to this finite lattice (for
which $x$ is now the top element) to compute $\dec{x}$. Again, finiteness yields
from all constructs, with the exception of the lexicographic product and
linear sum. For these two constructs, a similar reasoning can be applied, but
focusing on a quotient sublattice in order to achieve finiteness.

\begin{definition}[Quotient sublattice]
  \label{def:quotient}
  Given elements $a \pleq b \in \crdt$, the \emph{quotient sublattice} $\quotient{b}{a}$
  is given by:
  \[
    \quotient{b}{a} = \{ x \in \crdt | a \sqleq x \sqleq b \}
  \]
\end{definition}

Quotients generalize ideals, as we have $\down x = \quotient{x}{\bot}$. As an
example, given some infinite set $U$ and the lattice $\nat \boxtimes \pow{U}$,
for each $x = \tup{n, s}$, the ideal $\down x$ is still infinite when $n > 0$,
as depicted in Figure~\ref{fig:lex-finite}.  However, for each $\tup{n, s}$,
the quotient $\quotient{\tup{n, s}}{\tup{n, \bot}}$ is a finite lattice, and
moreover, the elements given by $\dec{\tup{n, s}}$ are the same either when
considering the original lattice or the quotient sublattice. Therefore, we can
use the formula for $\dec{x}$ in Proposition~\ref{prop:dec}. A similar
reasoning can be used for linear sums.
Table \ref{tab:finite} summarizes these remarks; the second row applies only to
lexicographic products and linear sums\footnote{In order to
have a common notation for instances of $\boxtimes$ and $\oplus$,
$\oplus$ instances are presented as pairs.
For example, $\af{Left} a \in A \oplus B$ becomes $\tup{\af{Left}, a}$.}.

\begin{figure}[t]
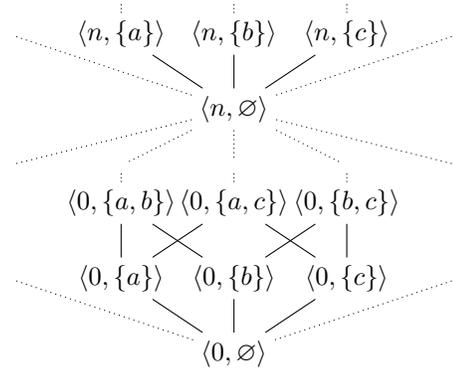

  \centerxy{
    <0.5cm,0pt>:
    (-3,9.5)*+{}="topa";
    (0,9.5)*+{}="topb";
    (3,9.5)*+{}="topc";
    (-6,8.5)*+{}="1l";
    (-3,8.5)*+{\tup{n, \seta}}="1a";
    (0,8.5)*+{\tup{n, \setb}}="1b";
    (3,8.5)*+{\tup{n, \setc}}="1c";
    (6,8.5)*+{}="1r";
    (0,6.5)*+{\tup{n, \varnothing}}="1empty";
    (-6,5)*+{}="0e1";
    (-3,5)*+{}="0e2";
    (0,5)*+{}="0e3";
    (3,5)*+{}="0e4";
    (6,5)*+{}="0e5";
    (-3,5)*+{}="0abn";
    (0,5)*+{}="0acn";
    (3,5)*+{}="0bcn";
    (-3,4)*+{\tup{0, \setab}}="0ab";
    (0,4)*+{\tup{0, \setac}}="0ac";
    (3,4)*+{\tup{0, \setbc}}="0bc";
    (-6,2)*+{}="0l";
    (-3,2)*+{\tup{0, \seta}}="0a";
    (0,2)*+{\tup{0, \setb}}="0b";
    (3,2)*+{\tup{0, \setc}}="0c";
    (6,2)*+{}="0r";
    (0,0)*+{\tup{0, \varnothing}}="0empty";
    "0empty"; "0l" **\dir{.};
    "0empty"; "0a" **\dir{-};
    "0empty"; "0b" **\dir{-};
    "0empty"; "0c" **\dir{-};
    "0empty"; "0r" **\dir{.};
    "0a"; "0ab" **\dir{-};
    "0a"; "0ac" **\dir{-};
    "0b"; "0ab" **\dir{-};
    "0b"; "0bc" **\dir{-};
    "0c"; "0ac" **\dir{-};
    "0c"; "0bc" **\dir{-};
    "0ab"; "0abn" **\dir{.};
    "0ac"; "0acn" **\dir{.};
    "0bc"; "0bcn" **\dir{.};
    "0e1"; "1empty" **\dir{.};
    "0e2"; "1empty" **\dir{.};
    "0e3"; "1empty" **\dir{.};
    "0e4"; "1empty" **\dir{.};
    "0e5"; "1empty" **\dir{.};
    "1empty"; "1l" **\dir{.};
    "1empty"; "1a" **\dir{-};
    "1empty"; "1b" **\dir{-};
    "1empty"; "1c" **\dir{-};
    "1empty"; "1r" **\dir{.};
    "1a"; "topa" **\dir{.};
    "1b"; "topb" **\dir{.};
    "1c"; "topc" **\dir{.};
  }
  \caption{Hasse diagram of $\nat \boxtimes \pow{U}$, for an infinite set $U$,
  where most ideals are infinite.}
  \label{fig:lex-finite}
\end{figure}
  \subsection{Decomposing Compositions}
\label{app:decomposing}

In this section we show that for each composition technique there is a 
corresponding decomposition rule. 
As the lattice join $\join$ of a composite CRDT is defined in terms of
the lattice join of its components \cite{crdt-composition},
decomposition rules of a composite CRDT follow the same idea and resort to
the decomposition of its smaller parts.
We now present such rules for all lattice compositions
covered in Tables \ref{tab:dcc-dist} and \ref{tab:finite}.

\begin{center}
  \vspace{5pt}
  {
\setlength\tabcolsep{.1em}
\begin{tabular}{rll}
  $c$ & $\in C$:&
  \hspace{.3em}$\dec{c} = \{c\}$ \\[3pt]

  $\tup{a, b}$ & $\in A \times B$:&
  \hspace{.3em}$\dec{\tup{a, b}} = \dec{a} \times \{\bot\} \union \{\bot\} \times \dec{b}$ \\[3pt]

  $\tup{c, a}$ & $\in C \boxtimes A$:&
  \hspace{.3em}$\dec{\tup{c, a}} = \dec{c} \times \dec{a}$ \\[3pt]

  $\af{Left} a$ & $\in A \oplus B$:&
  \hspace{.3em}$\dec{\af{Left} a} = \{ \af{Left} v | v \in \dec{a} \}$ \\[3pt]

  $\af{Right} b$ & $\in A \oplus B$:&
  \hspace{.3em}$\dec{\af{Right} b} = \{ \af{Right} v | v \in \dec{b} \}$ \\[3pt]

  $f$ & $\in U {\hookrightarrow} A$:&
  \hspace{.3em}$\dec{f} = \{ k \mapsto v | k \in \dom(f) \land v \in \dec{f(v)} \}$ \\[3pt]

  $s$ & $\in \pow{U}$:&
  \hspace{.3em}$\dec{s} = \{\{e\} | e \in s \}$ \\[3pt]

  $s$ & $\in \mpow{P}$:&
  \hspace{.3em}$\dec{s} = \{\{e\} | e \in s \}$
\end{tabular}
}
  \vspace{4pt}
\end{center}

Note how the decompositions of $\af{GCounter}$ and $\af{GSet}$
presented in Section \ref{subsec:jd} are an application of
these rules.
As a further example, consider a \emph{positive-negative counter} \---
$\af{PNCounter}$ \--- a CRDT counter that allows both increments
and decrements. In this CRDT, each replica identifier is mapped
to a pair where the first component tracks the number of increments,
and the second the number of decrements, i.e.
$\af{PNCounter} = \ids \map (\nat \times \nat)$.
Given a $\af{PNCounter}$ state
$p = \{ \A {\mapsto} \tup{2, 3}, \B {\mapsto} \tup{5, 5} \}$
(2 increments by $\A$, 3 decrements by $\A$, and an equal
number of increments and decrements by $\B$),
the irredundant join decomposition of $p$ is
$\dec{p} = \{
  \{ \A {\mapsto} \tup{2, 0} \},
  \{ \A {\mapsto} \tup{0, 3} \},
  \{ \B {\mapsto} \tup{5, 0} \},
  \{ \B {\mapsto} \tup{0, 5} \}
\}$.

\end{document}